\long\def\comment#1{}
\newcommand{\ev}{{\mathbf e}}
\newcommand{\sv}{{\mathbf s}}
\newcommand{\Ac}{{\mathcal A}}
\newcommand{\Bc}{{\mathcal B}}
\newcommand{\Cc}{{\mathcal C}}
\newcommand{\Dc}{{\mathcal D}}
\newcommand{\Gc}{{\mathcal G}}
\newcommand{\Hc}{{\mathcal H}}
\newcommand{\Kc}{{\mathcal K}}
\newcommand{\Nc}{{\mathcal N}}
\newcommand{\Pc}{{\mathcal P}}
\newcommand{\Qc}{{\mathcal Q}}
\newcommand{\Rc}{{\mathcal R}}
\newcommand{\Sc}{{\mathcal S}}
\newcommand{\Zc}{{\mathcal Z}}
\newcommand{\dsf}{{\mathsf d}}
\newcommand{\fsf}{{\mathsf f}}
\newcommand{\gsf}{{\mathsf g}}
\newcommand{\hsf}{{\mathsf h}}
\newcommand{\isf}{{\mathsf i}}
\newcommand{\lsf}{{\mathsf l}}
\newcommand{\msf}{{\mathsf m}}
\newcommand{\qsf}{{\mathsf q}}
\newcommand{\ssf}{{\mathsf s}}
\newcommand{\usf}{{\mathsf u}}
\newcommand{\vsf}{{\mathsf v}}
\newcommand{\Gsf}{{\mathsf G}}
\newcommand{\Hsf}{{\mathsf H}}
\newcommand{\Ksf}{{\mathsf K}}
\newcommand{\Lsf}{{\mathsf L}}
\newcommand{\Msf}{{\mathsf M}}
\newcommand{\Nsf}{{\mathsf N}}
\newcommand{\Rsf}{{\mathsf R}}
\newcommand{\Ssf}{{\mathsf S}}
\newcommand{\Tsf}{{\mathsf T}}
\newcommand{\Usf}{{\mathsf U}}
\newcommand{\Xsf}{{\mathsf X}}
\newcommand{\Zsf}{{\mathsf Z}}
\def\fsf{{\mathsf f}}
\newtheorem{thm}{Theorem}
\newtheorem{cor}{Corollary}
\newtheorem{lem}{Lemma}
\newtheorem{rem}{Remark}
\newtheorem{example}{Example}
\providecommand{\definitionname}{Definition}
\newcommand{\tikzmark}[1]{\tikz[overlay,remember picture] \node (#1) {};}
\newcommand{\DrawBox}[4][]{%
    \tikz[overlay,remember picture]{%
        \coordinate (TopLeft)     at ($(#2)+(-0.2em,0.9em)$);
        \coordinate (BottomRight) at ($(#3)+(0.2em,-0.3em)$);
        \path (TopLeft); \pgfgetlastxy{\XCoord}{\IgnoreCoord};
        \path (BottomRight); \pgfgetlastxy{\IgnoreCoord}{\YCoord};
        \coordinate (LabelPoint) at ($(\XCoord,\YCoord)!0.5!(BottomRight)$);
        \draw [red,#1] (TopLeft) rectangle (BottomRight);
        \node [below, #1, fill=none, fill opacity=1] at (LabelPoint) {#4};
    }
}
\begin{document}
\begin{sloppypar}

\title{Fundamental Limits of Distributed Linearly Separable Computation under Cyclic Assignment}  
\author{
Wenbo~Huang, 
Kai~Wan,~\IEEEmembership{Member,~IEEE,} 
Hua~Sun,~\IEEEmembership{Member,~IEEE,} 
Mingyue~Ji,~\IEEEmembership{Member,~IEEE,}
Robert~Caiming~Qiu,~\IEEEmembership{Fellow,~IEEE}
and~Giuseppe~Caire,~\IEEEmembership{Fellow,~IEEE}
\thanks{
A short version of this paper   was presented at the 2023 IEEE International Symposium on Information Theory (ISIT)~\cite{huang2023ISITversion}.
}
\thanks{
W.~Huang,~K.~Wan, and R.~C.~Qiu are with the School of Electronic Information and Communications, Huazhong University of Science and Technology, 430074 Wuhan, China (e-mail: \{eric\_huang, kai\_wan, caiming\}@hust.edu.cn). The work of W.~Huang, K.~Wan, and R.~C.~Qiu was partially funded by the   National Natural
Science Foundation of China (NSFC-12141107).}
\thanks{
H.~Sun is with the  Department of Electrical Engineering, University of North Texas, Denton, TX 76203, USA (e-mail: hua.sun@unt.edu). The work of H. Sun was supported in part by NSF Awards 2007108 and 2045656.
}
\thanks{
M.~Ji is with the Electrical and Computer Engineering Department, University of Florida, Gainesville, FL 32611, USA (e-mail: mingyueji@ufl.edu). The work of M.~Ji was supported in part by CAREER Award 2145835 and NSF Award 2312227.}
\thanks{
 G.~Caire is with the Electrical Engineering and Computer
Science Department, Technische Universit\"at Berlin, 10587 Berlin, Germany
(e-mail:  caire@tu-berlin.de). The work of   G.~Caire was partially funded by the European Research Council under the ERC Advanced Grant N. 789190, CARENET.}
}

\maketitle

\begin{abstract}

This paper studies the master-worker distributed linearly separable computation problem, where the considered computation task, referred to as linearly separable function,  is a generic linear map. This model includes cooperative distributed gradient coding, real-time rendering, linear transforms, etc. as special cases. 
The computation task on $\Ksf$ datasets can be expressed as $\Ksf_{\rm c}$ linear combinations of $\Ksf$ messages, where each message is the output of an individual function on one dataset. 
In this distributed computing model, the $\Ksf$ datasets  are assigned to $\Nsf$ workers for computation. Due to the possible presence of stragglers, it is required that the master can obtain the desired computation task from the answers of any $\Nsf_{\rm r}$ out of $\Nsf$ workers. 
The computation cost is defined as the number of datasets assigned to each worker, while the communication cost is defined as the number of codewords that should be received. The objective is to characterize the optimal tradeoff between the computation and communication costs. 
A common way to assign the datasets to the workers is ``cyclic assignment’’. This has been considered in several theoretical works, as well as gradient coding etc. Motivated by its theoretical and practical relevance, in this paper we focus on the cyclic assignment and  solve the problem by determining the optimal computation/communication cost tradeoff  when $\Nsf=\Ksf$ and order optimal within a factor of $2$ otherwise. In particular, this paper proposes a new computing scheme with the cyclic assignment based on the concept of interference alignment, by treating each message which cannot be computed by a worker as an interference from this worker. 
Beyond the appealing order-optimality result, we also show that the proposed scheme achieves significant gains over the current state of the art in practice. 
Experimental results over Tencent Cloud show   the reduction of whole distributed computing process time of our scheme is up to  $72.8\%$ compared to the benchmark scheme which treats the computation on   $\Ksf_{\rm c}$ linear combinations as $\Ksf_{\rm c}$ individual computations.

\end{abstract}

\begin{IEEEkeywords}
Coded distributed computing, linearly separable function, cyclic assignment, interference alignment
\end{IEEEkeywords}

\section{Introduction}
\label{sec:intro}

The emergence of cloud and edge computing with the applications of artificial intelligence and machine learning requires efficient resource utilization and poses challenges in performing large-scale computations on big data~\cite{6G:_The_Next_Horizon}. To accelerate the computing process, distributed computing technology is widely adopted by clients on cloud platforms~\cite{Secure_Distributed_Matrix_Computation_With_Discrete_Fourier_Transform}. 
Leading cloud computing platforms such as  Amazon Web Services (AWS)~\cite{cloud2011amazon}, Microsoft Azure~\cite{microsoft_azure}, and Google Cloud Platform~\cite{Google_cloud} are commonly used in real systems. Apache Spark~\cite{zaharia2010spark} and MapReduce~\cite{dean2008mapreduce} are some of the distributed computing frameworks that have attracted cutting-edge research~\cite{wang2022big_data_analytics,shi2020communicationAI,wangni2018gradient_sparsification}.

The performance of distributed systems can be adversely affected by the presence of slow or dropped computing nodes,  referred to as  {\it stragglers}, and limited communication bandwidth. 
The overall distributed computation is bottlenecked by stragglers when the master needs to wait for all nodes' computing results~\cite{Straggler_1, Straggler_2}. Moreover,  the large-scale data exchanged in distributed computing systems (among the computing nodes or between each computing node and the server) results in high communication cost
~\cite{CCG,Communication_cost_1,Communication_cost_2}.   
To jointly address these two challenges in distributed computing, coding was originally introduced in~\cite{Speeding_Up_Distributed} to efficiently accelerate distributed matrix multiplication against stragglers, where the authors smartly treat stragglers as erasures in communications, and then erasure codes such as Minimum Distance Separable (MDS) codes were proposed to be used in distributed computing. Following the seminal work~\cite{Speeding_Up_Distributed}, coding techniques have been widely used to improve the efficiency of various distributed computing tasks, such as gradient descent~\cite{pmlr-v70-tandon17a, CCG,2021adaptive}, linear transform~\cite{Short-dot,m=1,Same_problem}, and matrix multiplication~\cite{Straggler_2,yu2020straggler,yu2019lagrange}. These computing tasks find wide application in numerous areas of engineering, physics, and mathematics.

 In this paper, we consider the {\it linearly separable computation}~\cite{Separable_Functions}, which covers gradient descent and linear transform as special cases and can be seen as a type of matrix multiplication. It has wide  applications in  practical machine learning and engineering problems, such as  gradient descent, fourier transform, real-time rendering, hybrid expert system and mixture-of-experts~\cite{brigham1988fast, nussbaumer1981fast, akenine2019real, yu2021plenoctrees, manoharan2023study, gale2023megablocks} . The distributed linearly separable computation problem is formulated over a master-worker system, where the master wants to compute a function $f(D_1,\ldots,D_{\Ksf})$ of datasets $\{D_1,\ldots,D_{\Ksf}\}$  with the help of $\Nsf$ workers as shown in Fig.~\ref{fig:system_model_in_original}.  
The objective function of linear separable computation $f(D_1,\ldots,D_{\Ksf})$  can be  further written as:
\begin{align}
     f(D_1,\ldots,D_{\Ksf})=g(f_1(D_1),\ldots,f_{\Ksf}(D_{\Ksf}))=g(W_1,\ldots,W_{\Ksf}),
     \label{eq:linearly sep function}
\end{align} 
where $W_{k}=f_{k}(D_{k}), k \in [\Ksf]$, is the computation result of the function $f_{k}(\cdot)$ with input $D_k$.  In this function the computation operates on different datasets separately. In addition,   linearly separable computation considers that   $g(\cdot)$ is a linear mapping. Then $g(W_1,\ldots,W_{\Ksf})$ is formed by $\Ksf_{\rm c}$ linear combinations of the  messages $W_1,\ldots,W_{\Ksf}$, assuming that each message  contains $\Lsf$ symbols uniformly i.i.d. on some finite field  $\mathbb{F}_{\qsf}$. In other words, $g(W_1,\ldots,W_{\Ksf})$ can be written as a matrix multiplication, 
${\bf F} [W_1; \ldots; W_{\Ksf}]$, where $\Ksf_{\rm c}$ linear functions are required and the  dimension of ${\bf F}$ is $\Ksf_{\rm c} \times \Ksf$. 

\begin{figure}
    \centering
        \includegraphics[scale=0.7]{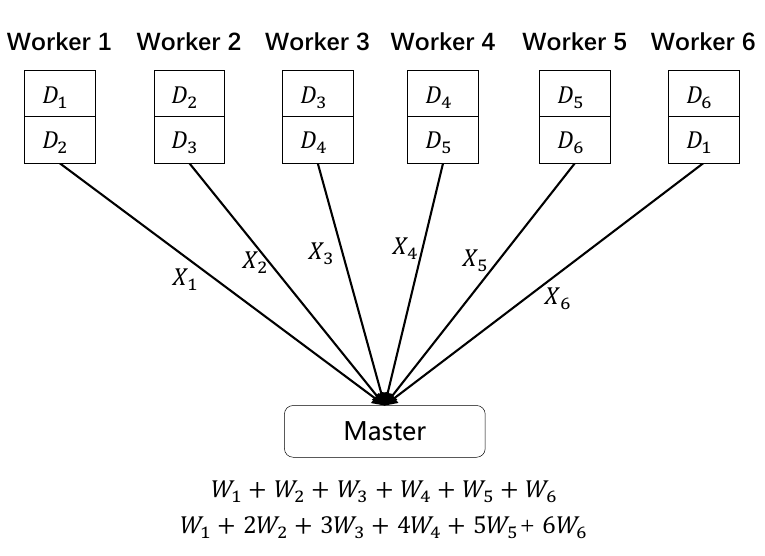}   
    \caption{\small  A distributed linearly separable computation system with$(\Ksf,\Nsf,\Nsf_{\rm r},\Ksf_{\rm c},\msf)=(6,6,6,2,2)$. }
      \label{fig:system_model_in_original}
\end{figure}

 The distributed linearly separable computing framework operates in three stages. First, the master assigns to each worker a subset of $\Msf = \frac{\Ksf}{\Nsf}(\Nsf - \Nsf_{\rm r} + \msf)$ datasets, ensuring robustness against up to $\Nsf - \Nsf_{\rm r}$ stragglers; a widely adopted approach is the cyclic assignment. Each worker then computes local messages from its assigned datasets and transmits a coded response to the master. Finally, the master decodes the desired linear combinations after receiving responses from any $\Nsf_{\rm r}$ workers. In conclusion, the framework focuses on characterizing the fundamental tradeoff between computation cost $\Msf$ and communication cost $\Rsf$. (The detailed system parameters will be discussed in Section~\ref{sec:system})

In the literature, various works have been proposed for the above problem with different system parameters:
\begin{itemize}
\item $\Ksf_{\rm c}=1$.
When $\Ksf_{\rm c}=1$, the distributed linearly computation problem reduces to the distributed gradient descent problem proposed in~\cite{pmlr-v70-tandon17a}. When the computation cost is minimum ($\Msf = \frac{\Ksf}{\Nsf}(\Nsf-\Nsf_{\rm r}+1)$), the optimal communication was characterized in~\cite{pmlr-v70-tandon17a}. For the general computation cost $(\Msf = \frac{\Ksf}{\Nsf}(\Nsf-\Nsf_{\rm r}+\msf), \msf > 1)$, distributed gradient descent schemes were proposed in~\cite{CCG, Short-dot}, which are optimal under the constraint of linear coding. 

\item $\Ksf_{\rm c}>1$. 
A distributed computing scheme was proposed in~\cite{m=1}, which is optimal under the constraint of the cyclic assignment, and is generally optimal under some system parameters with minimum computation cost.     
Under the constraint of  $\Nsf \geq \frac{{\msf + \usf - 1}}{\usf} + \usf\left( {{\Nsf_{\rm r}} - \msf - \usf + 1} \right)$ where $\usf = \left\lceil {\frac{{{{\Ksf_{\rm c}}\Nsf}}}{\Ksf}} \right\rceil$, a distributed computing scheme was proposed in~\cite{Same_problem} which is (order) optimal under the constraint of cyclic assignment with general computation cost. 
\end{itemize}  
Apart from the above mentioned special cases, the (order) optimality on the tradeoff between the computation and communication costs remains open, even under the constraint of cyclic assignment. 

Another critical remark is that the considered distributed linearly separable computation problem can be seen as a distributed one-side matrix multiplication problem with uncoded data assignment, where `one-side' means that the Right-Hand-Side (RHS) matrix in the multiplication is a matrix composed of datasets and the LHS matrix is the coefficient matrix known by all workers. Coded distributed matrix 
multiplication was widely studied in the literature (e.g. see~\cite{Straggler_2, yu2020straggler, yu2020coded_resilient, bitar2022adaptive}). However, most works (except the ones mentioned above) are built on coded data assignment, where linear combinations of $W_{1},..., W_{\Ksf}$ are assigned to each worker. 
These schemes could not be used in the considered problem since $W_{1},..., W_{\Ksf}$ are not raw datasets and cannot be assigned to the workers. Instead, we assign some of $D_{1},..., D_{\Ksf}$ to each worker, whose computation is using the assigned datasets; thus, the data assignment is uncoded.

\subsection{Main Contributions}
 For the distributed linearly separable computation problem under the cyclic assignment~\cite{Same_problem} with $\msf>1$ and $\Ksf_{\rm c}>1$,   we propose a novel computing scheme. Compared to the converse bound under the cyclic assignment in~\cite{Same_problem}, the proposed computing scheme is optimal if $\Ksf=\Nsf$, and order optimal within a factor of $2$ otherwise. Our scheme is based on the interference alignment strategy originally proposed in~\cite{2008Interference_Alignment}, by treating each message that one worker cannot compute as an `interference' to this worker. 
 Existing works on  distributed computing   schemes based on interference alignment~\cite{jia2019securedistributed,kakar2019distributedsecure,jia2019matrixCSA},  align the interferences at the receiver side (master node) (the undesired messages are treated as interferences to the receiver) such that the dimension of the linear space spanned by the received messages can be reduced. In contrast, the main novelty of our computing scheme is to use interference alignment at the transmitter side (worker nodes), such that in the transmission of each  worker the interferences are set to be $0$ while the number of transmissions is reduced.  
Our new achievability strategy essentially closes the optimal tradeoff problem for all system parameters. In addition, to demonstrate its effectiveness in practical scenarios, we have implemented the proposed scheme in Tencent Cloud and compared it with the benchmark that uses the computing scheme of~\cite{CCG} repeated $\Ksf_{\rm c}$ times (i.e., for each linear combination). Beyond the appealing order-optimality result, we show that the proposed scheme achieves significant gains over the current state of the art in practice. 
Experimental results show that the reduction of whole distributed computing process time of our scheme is up to  $72.8\%$ compared to the benchmark scheme.  
 


\subsection{Paper Organization}
The rest of this paper is organized as follows. Section~\ref{sec:system} presents the distributed linearly separable computation problem. Section~\ref{sec:main} provides the main result in this paper and some numerical evaluations. Section~\ref{sec:Achievable coding scheme} describes the proposed coding scheme. 
 Section~\ref{sec:Conclusion} concludes the paper, and some proofs are given in the Appendices.

\subsection{Notation Convention}
\label{sub:notation}
Calligraphic symbols denote sets, 
bold symbols denote vectors and matrices,
and sans-serif symbols denote system parameters.
We use $|\cdot|$ to represent the cardinality of a set or the length of a vector;
$[a:b]:=\left\{ a,a+1,\ldots,b\right\}$ and $[n] := [1:n]$; 
the sum of a set $\Sc$ and a scalar $ a$, $\Sc+a$ represents the resulting set by incriminating each element of $\Sc$ by $a$;
$\mathbb{F}_{\qsf}$ represents a  finite field with order $\qsf$;   
$\mathbf{A}^{\text{\rm T}}$  and $\mathbf{A}^{-1}$ represent the transpose  and the inverse of matrix $\mathbf{A}$, respectively;
$(\mathbf{A})_{m \times n}$ explicitly indicates that the matrix $\mathbf{A}$ is of dimension $m \times n$;
$\mathbf{A}^{(\Sc)_{\rm r}}$ represents the sub-matrix of $\mathbf{A}$ which is composed of the rows  of $\mathbf{A}$ with indices in $\Sc$ (here $\rm r$ represents `rows'); 
$\mathbf{A}^{(\Sc)_{\rm c}}$ represents the sub-matrix of $\mathbf{A}$ which is composed of the columns  of $\mathbf{A}$ with indices in $\Sc$ (here $\rm c$ represents `columns'); 
$\text{Mod}(b, a)$ represents the modulo operation on $b$ with integer quotient $a$ and in this paper we let $\text{Mod}(b, a)\in \{1, \ldots, a\}$;
let $a|b$ represent $b$ is divisible by $a$;
  In the rest of the paper, entropies will be in base $\qsf$, where $\qsf$ represents the field size.

\section{System Model}
\label{sec:system}
We consider the $(\Ksf,\Nsf,\Nsf_{\rm r},\Ksf_{\rm c})$ distributed linearly separable computation problem  over the canonical master-worker
distributed system, originally proposed in~\cite{m=1}. 
A master wants to compute a  function of $\Ksf$  datasets $D_1, \ldots, D_{\Ksf}$, with the help of $\Nsf$ workers.

With the assumption that the function is linearly separable from the datasets, the computation task can be   written as $\Ksf_{\rm c}\leq \Ksf$ linear combinations of $\Ksf$ messages
\begin{align}
    &f({D_1},{D_2}, \ldots ,{D_{\Ksf}}) = g({f_1}({D_1}), \ldots ,{f_{\Ksf}}({D_{\Ksf}}))  \nonumber\\
    &= g({W_1}, \ldots ,{W_{\Ksf}}) = {\bf F}[{W_1}; \ldots ;{W_\Ksf}] = [{F_1};\ldots;{F_{\Ksf_{\rm c}}}], \label{eq:computation task}
\end{align}
where the $i^{\text{th}}$ message is  ${W_i} = {f_i}({D_i}) $, representing  the outcome of the  component function $f_i(\cdot)$ applied to dataset $D_i$. We assume that  each message $W_i$ contains $\Lsf$ uniformly i.i.d. symbols in $\mathbb{F}_{\qsf}$, where $\qsf$ is large enough.\footnote{\label{foot:L large}  In this paper, we 
  consider the case  $\Ksf$ divides $\Nsf$. Note that the proposed schemes could be directly extended to the general case of $\Ksf, \Nsf$ by introducing some virtual datasets as proposed in~\cite{m=1}. And we assume that $\Lsf$ is large enough such
that any sub-message division is possible.} ${\bf F}$ represents the demand matrix with dimension $\Ksf_{\rm c} \times \Ksf$, where each element is uniformly i.i.d. over $\mathbb{F}_{\qsf}$, so that when $\Ksf_{\rm c} \le \Ksf,$  $\text{Rank}({\bf F}) = \Ksf_{\rm c}$ with high probability and when $\Ksf_{\rm c} > \Ksf,$ $\text{Rank}({\bf F}) = \Ksf$ with high probability, we only consider the non-trivial setting $\Ksf_{\rm c} \le \Ksf$. 

 The distributed computing framework comprises three phases, {\it data assignment, computing, and decoding}. 
 \paragraph*{Data assignment phase}
 We assign $\Msf:=\frac{\Ksf}{\Nsf} (\Nsf-\Nsf_{\rm r}+\msf)$ datasets to each worker. $\Zc_n$ denotes the set of indices of datasets assigned to worker $n$, where $\Zc_n \subseteq [\Ksf]$ and $|\Zc_n| = \Msf$.  We use $\Zc_n, n \in [\Nsf]$ to describe the data assignment.\footnote{\label{computation cost} In a distributed system, the complexity of computing the linear combinations of the messages is much lower than computing the separable functions. So the computation cost can be represented by the number of messages each worker computes.} 

\paragraph*{Computing phase}
Each worker $n \in [\Nsf]$ first computes the messages   $W_k = f_k(D_{k})$ where $k \in \Zc_n$. Then it computes the codeword $X_n$, which is a  function of the $\Msf$ messages $\{W_k:k\in \Zc_n\}$, where
$
      X_n = \varphi_{n} \big(\{W_k:k\in \Zc_n\}, {\bf F} \big)
$      
  with $    
      \varphi_{n} : \mathbb{F}_{\qsf}^{|\Zc_n|\Lsf} \times {\mathbb{F}_{\qsf}}^{{\Ksf_{\rm c}}\times \Ksf} \to {\mathbb{F}_{\qsf}}^{\Tsf_{n}},
$
and sends $X_n$ back to the master. The number of symbols in $X_n$ is denoted by $\Tsf_n$.

\paragraph*{Decoding phase}
The master only waits for the codewords from the first $\Nsf_{\rm r}$ arriving workers, and we call them active workers. Since the master cannot foresee the transmission environment and the computation capability of each worker, 
the computing scheme should be designed to tolerate any $\Nsf - \Nsf_{\rm r}$ stragglers. Thus, for each subset of workers $\Ac \subseteq [\Nsf]$ where $|\Ac| = \Nsf_{\rm r}$, by defining
$
    {X_{\Ac}}: = \{ {X_n}:n \in \Ac\}, 
$
there  should  be a decoding function such that 
$
\hat{g}_{\Ac}= \phi_{\Ac}\big( X_{\Ac}, {\bf F} \big),
$
where 
$
    {\phi _\Ac}: \mathbb{F}_{\qsf}^{\sum_{n \in \Ac}{\Tsf_{\rm n}}} \times {\mathbb{F}_{\qsf}}^{{\Ksf_{\rm c}}\times \Ksf} \to {\mathbb{F}_{\qsf}}^{{\Ksf_{\rm c}}\times \Lsf}.
$

The  worst-case     
error probability is denoted by
\begin{align}
 \varepsilon:= \max_{\Ac  \subseteq [\Nsf]: |\Ac|= \Nsf_{\rm r}} \Pr\{ \hat{g}_{\Ac} \neq g(W_1,   \ldots, W_{\Ksf}) \}. 
\end{align}
A computing scheme is called achievable if   the  worst-case     
error probability vanishes when $\qsf \to \infty$. 
 
We define 
\begin{equation}
    \Rsf: = \mathop {\max }\limits_{\Ac \subseteq [\Nsf]:|\Ac| = {\Nsf_{\rm r}}} \frac{{\sum\nolimits_{n \in \Ac} {{\Tsf_n}} }}{\Lsf}
\end{equation} 
as the communication cost, 
which represents the worst-case (normalized) number of symbols received by the master from any $\Nsf_{\rm r}$ active workers to recover the computation task.

\paragraph*{Objective}
The objective of the    $(\Ksf,\Nsf,\Nsf_{\rm r},\Ksf_{\rm c})$ distributed linearly separable computation problem is to characterize the optimal (minimum) communication cost  $\Rsf^{\star}$ among all achievable computing schemes. 
Note  that, in order to tolerate $\Nsf-\Nsf_{\rm r}$ stragglers, we should have $\msf \geq 1$~\cite{pmlr-v70-tandon17a}; i.e., $\msf\in [\Nsf_{\rm r}]$. 

\paragraph*{Cyclic assignment}
Under the cyclic assignment, each dataset $D_i$, where $i \in [\Ksf]$,  is assigned to workers with indices $\{\text{Mod}(i,\Nsf), \text{Mod}(i-1,\Nsf), \ldots, \text{Mod}(i - \Msf +1, \Nsf)\}$. Thus for each worker $n\in [\Nsf]$, we have  
\begin{align}
\Zc_n &=  \underset{p \in \left[0:  \frac{\Ksf}{\Nsf} -1 \right]}{\cup}   \big\{\text{Mod}(n,\Nsf)+ p  \Nsf , \text{Mod}(n+1,\Nsf)+ p  \Nsf , \ldots, \nonumber\\& \text{Mod}(n+\Nsf-\Nsf_{\rm r}+\msf,\Nsf)+ p  \Nsf  \big\}  \label{eq:cyclic assignment n divides k}
\end{align}
with cardinality of $\frac{\Ksf}{\Nsf} (\Nsf-\Nsf_{\rm r} +\msf)$.
 
For example, when $(\Ksf = 6,  \Nsf = 3, \Nsf_{\rm r} = 2, \msf = 1 ),$ we have $\Zc_{1} = \left\{1,2,4,5 \right\}, \Zc_{2} = \left\{2,3,5,6 \right\}, \Zc_{3} = \left\{1,3,4,6 \right\}.$
The optimal communication cost is denoted by $\Rsf^{\star}$.
The optimal communication cost under the cyclic assignment is denoted by $\Rsf^{\star}_{cyc}$.
In addition, the achievable communication cost by the proposed coding scheme in this paper is denoted by $\Rsf_{ach}.$ For ease of notation, we define $\usf:= \left\lceil \frac{\Ksf_{\rm c} \Nsf}{\Ksf}\right\rceil$ in the following.  
 
 A converse bound on $\Rsf^{\star}_{\text{cyc}}$ was proposed in~\cite{Same_problem}, which is reviewed as follows.
\begin{thm}[\cite{Same_problem}]
\label{thm:converse}
 For the  $(\Ksf,\Nsf,\Nsf_{\rm r}, \Ksf_{\rm c})$ distributed linearly separable computation problem under the cyclic assignment,
\begin{itemize}
\item when $\Ksf_{\rm c} \in \left[ \frac{\Ksf}{\Nsf} (\Nsf_{\rm r}-\msf+1) \right]$, we have 
\begin{subequations}
\begin{align}
\Rsf^{\star}_{\text{cyc}} \geq   \frac{ \Nsf_{\rm r} \Ksf_{\rm c}}{\msf+\usf-1}. \label{eq:case 1 converse}
\end{align} 
\item when $\Ksf_{\rm c} \in  \left( \frac{\Ksf}{\Nsf} (\Nsf_{\rm r}-\msf+1) :\Ksf \right] $, we have 
\begin{align}
\Rsf^{\star}_{\text{cyc}} \geq \Rsf^{\star}  \geq  \Ksf_{\rm c}. \label{eq:case 2 converse}
\end{align} 
\end{subequations}
\end{itemize}
\hfill $\square$
\end{thm}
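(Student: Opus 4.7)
The plan is to prove the two bounds separately, since they are quite different in nature and only the first case actually uses the cyclic assignment.

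Case 2 (the bound $\Rsf \geq \Ksf_{\rm c}$) I would attack by a simple cut-set / entropy argument that makes no use of the assignment structure. Since $\qsf$ is taken large and the entries of $\bf F$ are uniformly chosen, $\bf F$ has full row rank $\Ksf_{\rm c}$ with probability $1-o(1)$; the messages $W_1,\ldots,W_{\Ksf}$ are i.i.d.\ uniform on $\mathbb{F}_{\qsf}^{\Lsf}$, so in that event $H(F_1,\ldots,F_{\Ksf_{\rm c}}) = \Ksf_{\rm c}\Lsf$. Because any active set $\Ac$ of size $\Nsf_{\rm r}$ must allow reconstruction with vanishing error probability, Fano's inequality gives $\sum_{n\in\Ac}\Tsf_n \geq \Ksf_{\rm c}\Lsf - o(\Lsf)$, which yields $\Rsf^{\star}\geq \Ksf_{\rm c}$ and hence $\Rsf^{\star}_{\text{cyc}}\geq \Ksf_{\rm c}$.

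Case 1 (the bound $\Nsf_{\rm r}\Ksf_{\rm c}/(\msf+\usf-1)$) is the substantive case and exploits cyclic symmetry. Note that for $\Ksf_{\rm c}=1$ the target reduces to the classical gradient-coding bound $\Nsf_{\rm r}/\msf$, and my plan is to generalize the sliding-window argument used there. I would fix a worst-case active set $\Ac$ of $\Nsf_{\rm r}$ consecutive workers and consider the family of windows $\Bc \subseteq \Ac$ consisting of $\msf+\usf-1$ consecutive workers. For each such window I would try to prove the per-window inequality
\begin{equation}
\sum_{n\in\Bc} \Tsf_n \;\geq\; \Ksf_{\rm c}\,\Lsf.
\end{equation}
Summing this over a symmetric family of $\Nsf_{\rm r}$ windows in which each worker of $\Ac$ appears exactly $\msf+\usf-1$ times — obtained either by averaging over cyclic shifts of $\Ac$ inside $[\Nsf]$ (using translation invariance of the cyclic assignment) or by directly treating the windows cyclically — collapses to
\begin{equation}
(\msf+\usf-1)\sum_{n\in\Ac}\Tsf_n \;\geq\; \Nsf_{\rm r}\,\Ksf_{\rm c}\,\Lsf,
\end{equation}
which rearranges to the claimed bound on $\Rsf^{\star}_{\text{cyc}}$.

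The hard part will be the per-window inequality, where the ceiling definition of $\usf$ enters in a delicate way. Using the cyclic assignment, each window $\Bc$ covers a contiguous block of $\frac{\Ksf}{\Nsf}(\Nsf-\Nsf_{\rm r}+\msf+\usf-1)$ datasets, while the complementary workers $\Ac\setminus\Bc$ leave at least $\usf\cdot\Ksf/\Nsf \geq \Ksf_{\rm c}$ datasets uncovered, all of them held by workers in $\Bc$. I would then pick an adversarial demand matrix $\bf F$ whose $\Ksf_{\rm c}$ rows are supported only on these "private-to-$\Bc$" datasets and apply a genie-aided argument: revealing to the decoder every message held by any worker in $\Ac\setminus\Bc$ renders $X_{\Ac\setminus\Bc}$ a deterministic function of the revealed side information, so $X_{\Bc}$ alone must carry the $\Ksf_{\rm c}\Lsf$ symbols of remaining entropy of the computation task, giving $H(X_{\Bc})\geq \Ksf_{\rm c}\Lsf$. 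The technical care needed is (i) verifying the dataset-coverage count from the cyclic formula for $\Zc_n$ and checking it is uniform across windows regardless of their position inside $\Ac$ (this is where I expect the most bookkeeping effort, since $\Ac\setminus\Bc$ need not be a single contiguous block); (ii) guaranteeing existence of such an adversarial $\bf F$ via the high-probability full-rank properties of random matrices as $\qsf\to\infty$; and (iii) making the genie step robust to the $o(\Lsf)$ Fano slack so that the entropy bound transfers cleanly to $\sum_{n\in\Bc}\Tsf_n$.
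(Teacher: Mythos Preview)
This theorem is cited from prior work (reference~\cite{Same_problem}); the present paper states it as a known converse and does not supply its own proof, so there is no in-paper argument to compare against.

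That said, your plan is the standard one and is correct in outline. Two remarks. First, the phrase ``pick an adversarial demand matrix ${\bf F}$ supported only on the private-to-$\Bc$ columns'' is slightly off: in this model ${\bf F}$ is random, not chosen. What you actually need (and what holds with high probability) is that the $\Ksf_{\rm c}\times|\Pc|$ submatrix of ${\bf F}$ on the private column set $\Pc$ has full row rank $\Ksf_{\rm c}$; once that holds the genie step goes through exactly as you wrote.

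Second, and more important: of your two options for producing a symmetric family of per-window inequalities, only the ``cycle $\Ac$'' one works in general. For a \emph{fixed} consecutive active set $\Ac$ with $\Nsf_{\rm r}<\Nsf$, a window $\Bc$ in the interior of $\Ac$ may have no private datasets at all: the holders of any single dataset form a contiguous block of length $\Msf=\Nsf-\Nsf_{\rm r}+\msf$, and when $\Bc$ is disconnected from the straggler block this $\Msf$-block cannot fit inside $\Bc\cup([\Nsf]\setminus\Ac)$ unless $\Nsf-\Nsf_{\rm r}\le\usf-1$. So the ``uniform across windows'' check in your item~(i) is not just bookkeeping---it actually fails. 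The safe route is to cycle the straggler block through all $\Nsf$ positions and, for each position, take $\Bc$ to be the $\msf+\usf-1$ active workers \emph{adjacent} to the stragglers, so that $\Bc$ together with the stragglers is a single contiguous block of length $\Msf+\usf-1$; this always yields exactly $\usf\cdot\Ksf/\Nsf\ge\Ksf_{\rm c}$ private datasets. Summing the resulting $\Nsf$ per-window inequalities (each worker appearing in $\msf+\usf-1$ of them), then averaging over the $\Nsf$ cyclically shifted active sets, gives the claimed bound.
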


The best achievable bound on the considered distributed linearly separable computation problem in the literature is the following. 
\begin{thm}[\cite{m=1,CCG,Same_problem}]
\label{thm:state of art achievable}
 For the  $(\Ksf,\Nsf,\Nsf_{\rm r}, \Ksf_{\rm c})$ distributed linearly separable computation problem, 
\begin{itemize}
\begin{subequations}
\item when $\msf =1$, the optimal communication cost can be achieved in~\cite{m=1}
\begin{align}
\Rsf_{\text{ach}} = \left\{ 
\begin{array}{rcl}
\Nsf_{\rm r}\Ksf_{\rm c}, &  & \text{when}~ \Ksf_{\rm c} \in \left[\frac{\Ksf}{\Nsf} \right];\\
 \frac{\Ksf}{\Nsf}\Nsf_{\rm r},   &   &\text{when}~   \Ksf_{\rm c} \in  \left( \frac{\Ksf}{\Nsf}: \frac{\Ksf}{\Nsf}\Nsf_{\rm r} \right]; \\
 \Ksf_{\rm c},  &    &\text{when}~    \Ksf_{\rm c} \in  \left(  \frac{\Ksf}{\Nsf}\Nsf_{\rm r} :  \Ksf \right];
\end{array}
\right. 
\end{align}

\item when $\Ksf_{\rm c}=1$, the optimal communication cost can be achieved in~\cite{CCG}
\begin{align}
\Rsf_{\text{ach}} = \frac{\Nsf_{\rm r}}{\msf}; 
\end{align}

\item when $\min\{\msf , \Ksf_{\rm c} \} >1$ and $40 \geq \Nsf \geq \frac{\msf+\usf-1}{\usf}+ \usf(\Nsf_{\rm r}-\msf-\usf+1)$, the following communication cost can be achieved in~\cite{Same_problem}
\begin{align}
\Rsf_{\text{ach}} = \left\{ 
\begin{array}{rcl}
\frac{\Nsf_{\rm r}\Ksf_{\rm c}}{\msf},    &\text{when}~  \Ksf_{\rm c} \in \left[ \frac{\Ksf}{\Nsf}\right]; \\
\frac{\Ksf \Nsf_{\rm r} \usf}{\Nsf ( \msf + \usf - 1)},    &\text{when}~   \Ksf_{\rm c} \in  \left( \frac{\Ksf}{\Nsf} : \frac{\Ksf}{\Nsf}(\Nsf_{\rm r} - \msf + 1) \right]; \\
\Ksf_{\rm c},  &\text{when}~    \Ksf_{\rm c} \in  \left( \frac{\Ksf}{\Nsf}(\Nsf_{\rm r} - \msf + 1) :  \Ksf \right].
\end{array}
\right. 
\end{align}
\end{subequations}
\end{itemize}
\hfill $\square$
\end{thm}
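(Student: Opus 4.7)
The plan is to treat this statement as a compilation of three previously-established achievable schemes and, for each regime, describe how the referenced construction instantiates to the cyclic assignment $\Zc_n$ given in~\eqref{eq:cyclic assignment n divides k}. No new argument is required; the task is to verify that each known scheme yields the quoted communication cost under the problem setup of Section~\ref{sec:system}.

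For the $\msf = 1$ regime I would invoke the scheme of~\cite{m=1}, which splits into three sub-regimes. When $\Ksf_{\rm c} < \Ksf/\Nsf$, each active worker simply forwards $\Ksf_{\rm c}$ scalars consisting of the partial evaluations of $F_1,\ldots,F_{\Ksf_{\rm c}}$ restricted to its $\Msf$ assigned messages; the cyclic overlap then lets the master combine these partial sums to recover every $F_j$, giving $\Nsf_{\rm r}\Ksf_{\rm c}$. When $\Ksf_{\rm c}\in [\Ksf/\Nsf, \frac{\Ksf}{\Nsf}\Nsf_{\rm r}]$, the construction aligns each worker's transmission into a $(\Ksf/\Nsf)$-dimensional subspace so that only $\Ksf/\Nsf$ coded symbols per active worker suffice to reconstruct all $\Ksf_{\rm c}$ combinations, yielding $\Ksf\Nsf_{\rm r}/\Nsf$. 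When $\Ksf_{\rm c} > \frac{\Ksf}{\Nsf}\Nsf_{\rm r}$, the rank of the combined transmission matrix forces $\Ksf_{\rm c}$ and this is achieved by transmitting $\Ksf_{\rm c}$ independent per-output symbols. Symmetry of the cyclic placement across every size-$\Nsf_{\rm r}$ subset of workers is what makes the construction uniform over the choice of stragglers.

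For $\Ksf_{\rm c} = 1$ I would apply the gradient coding scheme of~\cite{CCG}. Each worker transmits a single linear combination of its $\Msf = \frac{\Ksf}{\Nsf}(\Nsf - \Nsf_{\rm r} + \msf)$ locally computed messages $W_k$, where the coefficients are chosen so that the row span of the coefficient vectors of any $\Nsf_{\rm r}$ workers contains the demand row ${\bf F}_{1,\cdot}$. Sub-message splitting into $\msf$ pieces then reduces each worker's normalized payload to $1/\msf$ of a full message, delivering the communication cost $\Rsf_{\rm ach} = \Nsf_{\rm r}/\msf$. For the general $\min\{\msf,\Ksf_{\rm c}\} > 1$ case, under the additional constraint $\Nsf \geq \frac{\msf + \usf - 1}{\usf} + \usf(\Nsf_{\rm r} - \msf - \usf + 1)$, I would invoke the scheme of~\cite{Same_problem}: each $W_k$ is split into $\msf + \usf - 1$ sub-messages, and each active worker transmits $\usf$ coded sub-messages that simultaneously implement gradient-style straggler coding and linear-transform mixing across the $\Ksf_{\rm c}$ outputs. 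Counting transmissions in each sub-regime of $\Ksf_{\rm c}$ reproduces the three piecewise expressions stated in the theorem.

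The main obstacle is not the counting but the combinatorial feasibility of the third construction: one must argue that the aggregated decoding matrix restricted to any $\Nsf_{\rm r}$-subset of workers has full rank $\Ksf_{\rm c}$ over $\mathbb{F}_{\qsf}$. In~\cite{Same_problem} this is handled by a Schwartz–Zippel type genericity argument as $\qsf\to\infty$, combined with the numerical inequality on $\Nsf$; this inequality is precisely what restricts the regime of validity and is the reason the general case is not yet settled. Verifying that these three instantiations cover every $(\Ksf,\Nsf,\Nsf_{\rm r},\Ksf_{\rm c},\msf)$ satisfying the stated hypotheses completes the proof.
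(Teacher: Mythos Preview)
Your proposal is correct and in fact goes beyond what the paper does: the paper provides no proof of this theorem at all, treating it purely as a summary of prior work with citations to~\cite{m=1,CCG,Same_problem}. Your sketch of how each referenced construction instantiates to the cyclic assignment is accurate and appropriate for a compilation result of this kind.

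One small point worth noting: the upper bound $\Nsf \le 40$ in the third bullet is not a structural constraint of the scheme but reflects that the decodability (full-rank) condition in~\cite{Same_problem} was only verified numerically up to that size, which you allude to when mentioning the Schwartz--Zippel argument and the unresolved general case.
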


 Note that all schemes mentioned above in Theorem~\ref{thm:state of art achievable} are with the cyclic assignment. When $\min\{\msf, \Ksf_{\rm c}\} >1,$ the coded computing scheme in the literature can only work for a limited region ($\Nsf \geq \frac{\msf+\usf-1}{\usf}+ \usf(\Nsf_{\rm r}-\msf-\usf+1)$).  For the considered distributed linearly separable computation problem, not only the general (order) optimality remains quite open, but also the (order) optimality under the cyclic assignment has so far remained open.

\section{Main Results}
\label{sec:main}

\subsection{Main Results On Distributed Linearly Separable Computation}
\label{sub:separable}
Under the cyclic assignment given in~\eqref{eq:cyclic assignment n divides k}, in Section~\ref{sec:Achievable coding scheme} the scheme is instrumental to prove, constructively, the achievability of computation load - communication rate tradeoff given by the following theorem: 
\begin{thm}
\label{thm:main achievable scheme}
For the  $(\Ksf,\Nsf,\Nsf_{\rm r}, \Ksf_{\rm c})$ distributed linearly separable computation problem, where $\Nsf \leq 60$,
the following communication cost is achievable,  
\begin{itemize}
\item when  $\Ksf_{\rm c} \in \left[ \frac{\Ksf}{\Nsf}\right]$,  
\begin{subequations}
\begin{align}
  \Rsf_{\text{ach}}= \frac{\Nsf_{\rm r}\Ksf_{\rm c} }{\msf}; \label{eq:achie case 1}
\end{align} 
\item when  $ \Ksf_{\rm c}  \in \left( \frac{\Ksf}{\Nsf}: \frac{\Ksf}{\Nsf}(\Nsf_{\rm r}-\msf+1) \right]  $, 
\begin{align}
\Rsf_{\text{ach}}=\frac{\Nsf_{\rm r} \Ksf \usf}{\Nsf(\msf+\usf-1)}  ; \label{eq:achie case 2}
\end{align}

\item when  $ \Ksf_{\rm c} \in \left( \frac{\Ksf}{\Nsf}(\Nsf_{\rm r}-\msf+1) : \Ksf \right]$,
\begin{align}
\Rsf_{\text{ach}}=  \Ksf_{\rm c}. \label{eq:achie case 3}
\end{align}
\label{eq:achieve three cases}
\end{subequations}
\end{itemize}
\hfill $\square$
\end{thm}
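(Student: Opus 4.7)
The plan is to construct explicit coding schemes under the cyclic assignment~\eqref{eq:cyclic assignment n divides k} that realise the three rate expressions in Theorem~\ref{thm:main achievable scheme}, using interference alignment at the transmitter side. For every worker $n$, the messages $\{W_k : k \notin \Zc_n\}$ are treated as interferences to worker $n$ that must be nulled in every coded symbol it transmits. This is in contrast with the receiver-side alignment used in the distributed-matrix-multiplication literature~\cite{jia2019securedistributed,kakar2019distributedsecure,jia2019matrixCSA}, where undesired messages are aligned at the master. I expect the three cases to follow from a common template, tuned only by how each message is split.

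First I would split each message $W_k$ into $\alpha$ equal sub-messages over $\mathbb{F}_\qsf$, with $\alpha=\msf$ in Case~1, $\alpha=\msf+\usf-1$ in Case~2, and $\alpha=\Nsf_{\rm r}$ in Case~3. Each worker $n$ then transmits $\beta$ coded sub-symbols, each a linear combination of the $\Msf\alpha$ sub-messages it can compute, so the interference-nulling pattern is enforced automatically by the choice of codebook support. Taking $\beta=\Ksf_{\rm c}$, $\beta=\Ksf\usf/\Nsf$, and $\beta=\Ksf_{\rm c}$ respectively then recovers the three rate formulas through $\Rsf_{\text{ach}}=\Nsf_{\rm r}\beta/\alpha$; note that $\Nsf\mid\Ksf$ makes $\beta$ an integer in Case~2.

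Next I would design the nonzero encoding coefficients so that, for every active set $\Ac\in\binom{[\Nsf]}{\Nsf_{\rm r}}$, there exists a decoding matrix $\mathbf{D}_\Ac$ with $\mathbf{D}_\Ac X_\Ac = \mathbf{F}[W_1;\dots;W_\Ksf]$. The cyclic structure of $\Zc_1,\dots,\Zc_\Nsf$ guarantees that any $\Nsf_{\rm r}$ workers collectively cover each message at least $\msf$ times, which is exactly the redundancy needed for the alignment-compatible column span of $X_\Ac$ to reach dimension $\Ksf_{\rm c}\alpha$. Case~3 is degenerate in a useful way: the inequality $\Ksf_{\rm c}>\frac{\Ksf}{\Nsf}(\Nsf_{\rm r}-\msf+1)$ collapses the construction to an MDS-style split of the $\Ksf_{\rm c}$ demanded combinations across the $\Nsf_{\rm r}$ active workers, and the only thing to check is that each worker's $\Ksf_{\rm c}/\Nsf_{\rm r}$-dimensional share can be realised with the available messages.

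The main obstacle will be to verify that the $\Ksf_{\rm c}\alpha\times\Ksf_{\rm c}\alpha$ system coupling $\{\mathbf{E}_n\}_{n\in\Ac}$ (the encoders) with the demand matrix $\mathbf{F}$ is non-singular simultaneously for all $\binom{\Nsf}{\Nsf_{\rm r}}$ active sets. I would handle this through a Schwartz--Zippel argument over the large field $\mathbb{F}_\qsf$: reduce invertibility to the non-vanishing of a determinant polynomial in the free entries of $\{\mathbf{E}_n\}$ and of $\mathbf{F}$, and then exhibit a single specialisation at which it is nonzero. The $\mathbb{Z}_\Nsf$-symmetry of the cyclic assignment cuts this verification down to only a few equivalence classes of $\Ac$, which is where the combinatorial work of the proof concentrates. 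Letting $\qsf\to\infty$ then drives the worst-case error probability $\varepsilon$ to zero and establishes the three achievability statements of~\eqref{eq:achieve three cases}.
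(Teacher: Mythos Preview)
Your high-level template---split each message into $\alpha$ sub-pieces, have each worker send $\beta$ supported linear combinations, and certify decodability via Schwartz--Zippel---matches the paper. But there is a real gap at the step ``exhibit a single specialisation at which it is nonzero'': this is where essentially all of the work lives, and your plan does not say what that specialisation is or why one exists. The difficulty, made explicit in the paper, is that the encoder entries are \emph{not} free. In Case~2 each worker's transmissions are written as $\sv^{n,j}\mathbf{F}'\mathbf{W}$, where $\mathbf{F}'$ stacks the $\Ksf_{\rm c}\dsf$ real demands on top of $\frac{\Ksf\usf}{\Nsf}(\Nsf_{\rm r}-\dsf)$ \emph{virtual} demands, and $\sv^{n,j}$ must lie in the left null space of the interference sub-matrix $\mathbf{F}'(n)$. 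When $\usf<\Nsf_{\rm r}-\msf$, a generic choice of the virtual rows makes $\mathbf{F}'(n)$ full column-rank, so its left null space is too small to hold the required $\frac{\Ksf\usf}{\Nsf}$ independent coding vectors (the calculation around~\eqref{eq:not enough space}). The paper resolves this by building a rank-reduction matrix $\mathbf{E}$ from the observation that any $\Nsf_{\rm r}-\msf-\usf$ cyclically adjacent workers share a common missing-dataset set $\Qc_i$ of size $\frac{\Ksf}{\Nsf}(\usf+1)$, so a single right-null vector of $\mathbf{F}^{(\Qc_i)_c}$ simultaneously drops the interference rank for all of them; the virtual demands are then taken from the left null space of $\mathbf{E}^{\text{T}}$. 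This cross-worker alignment (contrast Example~\ref{ex:5522} with Example~\ref{ex:6622}) is the construction your proposal is missing---``the interference-nulling pattern is enforced automatically by the choice of codebook support'' does not capture it.

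Two smaller points. First, the decodability matrix is not $\Ksf_{\rm c}\alpha\times\Ksf_{\rm c}\alpha$ but $\frac{\Ksf\usf}{\Nsf}\Nsf_{\rm r}\times\frac{\Ksf\usf}{\Nsf}\Nsf_{\rm r}$: the virtual demands pad the target up to the number of received symbols, and one shows $\mathbf{S}^{\Ac}$ is invertible. Second, your Case~3 plan (an MDS-style split of the demanded combinations) is not what the paper does; it instead reduces to the boundary $\Ksf_{\rm c}=\frac{\Ksf}{\Nsf}(\Nsf_{\rm r}-\msf+1)$ of Case~2 by partitioning each message into $\binom{\Ksf_{\rm c}-1}{\frac{\Ksf}{\Nsf}(\Nsf_{\rm r}-\msf+1)-1}$ pieces and solving $\binom{\Ksf_{\rm c}}{\frac{\Ksf}{\Nsf}(\Nsf_{\rm r}-\msf+1)}$ sub-problems. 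Finally, be aware that even the paper does not fully close the Schwartz--Zippel step: the non-vanishing is proved only for $\Ksf_{\rm c}\ge\frac{\Ksf}{\Nsf}(\Nsf_{\rm r}-\msf+1)$ and for $\Nsf=\Nsf_{\rm r}$, $\msf=2$, $(\Ksf_{\rm c}+1)\mid\Nsf$, with the remaining parameters verified numerically for $\Ksf\le 60$ and stated as a conjecture.
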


Notice that when  $\Ksf_{\rm c} \in \left[\frac{\Ksf}{\Nsf} \right]$, as explained in~\cite{Same_problem}, to achieve~\eqref{eq:achie case 1}, we directly repeat the optimal computing schemes for $\Ksf_{\rm c}=1$ in~\cite{CCG}  $\Ksf_{\rm c}$ times. When $\Ksf_{\rm c} \in \left( \frac{\Ksf}{\Nsf}: \Ksf \right]$,  the coding scheme is based on interference alignment, and the decodability of the proposed scheme is proven in the Appendix~\ref{sec: proof of linear independence for each worker} based on the Schwartz-Zippel lemma~\cite{Schwartz, Zippel, Demillo_Lipton}. For the non-zero polynomial condition of the Schwartz-Zippel lemma, for the following two cases (i) $ \Ksf_{\rm c} \in \left[ \frac{\Ksf}{\Nsf}(\Nsf_{\rm r}-\msf+1) : \Ksf \right]$ and (ii) $\Nsf=\Nsf_{\rm r}$ and $\msf+\Ksf_{\rm c}-1$ divides $\Nsf$, we provide concrete proofs to show the non-zero polynomial condition, and thus prove our scheme is decodable. For the remaining cases, we numerically verify all cases up to $\Nsf \leq 60$,  and thus conjecture in the rest of the paper, {\bf the proposed computing scheme is decodable for any system parameters} with the communication cost given in Theorem~\ref{thm:main achievable scheme}.

By comparing the converse bound in Theorem~\ref{thm:converse} (proposed in~\cite{Same_problem}) and the achievable scheme in Theorem~\ref{thm:main achievable scheme}, it can obtain the following (order) optimal results under the constraint of the cyclic assignment, whose proof  is given in Appendix~\ref{sec: proof of compared}.
\begin{cor}
\label{thm: Compared}
    For the  $(\Ksf,\Nsf,\Nsf_{\rm r}, \Ksf_{\rm c})$ distributed linearly separable computation problem, where $\Rsf^{\star}$ is denoted as the optimal communication cost, 
\begin{itemize}
 \item when  $\Ksf = \Nsf$, we have 
\begin{subequations}
\begin{align}
  \Rsf_{\text{ach}} = \Rsf^{\star}_{\text{cyc}} =  \left\{ {\begin{array}{*{20}{c}}
{\frac{{{\Nsf_{\rm r}}{\Ksf_{\rm c}}}}{{\msf + \Ksf_{\rm c} - 1}},  ~\text{if}~{\Ksf_{\rm c}} \in \left[ {{\Nsf_{\rm r}} - \msf + 1} \right]};\\
{{\Ksf_{\rm c}}, ~\text{if}~{\Ksf_{\rm c}} \in \left[ {{\Nsf_{\rm r}} - \msf + 1 : \Ksf} \right]};
\end{array}} \right.
\end{align} 
\item when  $\Ksf_{\rm c} \in \left[\frac{\Ksf}{\Nsf} \right]$, we have   
\begin{align}
 \Rsf_{\text{ach}} = \Rsf^{\star}_{\text{cyc}} = \frac{\Ksf_{\rm c} \Nsf_{\rm r}}{\msf}; 
\end{align}
\item when  $ \Ksf_{\rm c}  \in \left( \frac{\Ksf}{\Nsf}: \frac{\Ksf}{\Nsf}(\Nsf_{\rm r}-\msf+1) \right]  $, we have 
\begin{align}
\Rsf_{\text{ach}} \leq \frac{2\Ksf_{\rm c}}{\frac{\Ksf}{\Nsf}\usf}\Rsf_{\text{ach}} \leq  2\Rsf^{\star}_{\text{cyc}}; 
\end{align}
\item when  $ \Ksf_{\rm c} \in \left( \frac{\Ksf}{\Nsf}(\Nsf_{\rm r}-\msf+1) : \Ksf \right]$, we have 
\begin{align}
\Rsf_{\text{ach}} = \Rsf^{\star}_{\text{cyc}} = \Rsf^{\star} = \Ksf_{\rm c}.
\end{align}
\end{subequations}
\end{itemize}
\hfill $\square$
\end{cor}
To summarize Theorem~\ref{thm: Compared}, we can see that the proposed computing scheme in Theorem~\ref{thm:main achievable scheme} is order optimal within a factor of $2$ under the constraint of the cyclic assignment. 
In addition, it can be seen that  the cyclic assignment is generally optimal when $ \Ksf_{\rm c}  \in \left[ \frac{\Ksf}{\Nsf}: \frac{\Ksf}{\Nsf}(\Nsf_{\rm r}-\msf+1) \right]  $  or $ \Ksf_{\rm c} \in \left( \frac{\Ksf}{\Nsf}(\Nsf_{\rm r}-\msf+1) : \Ksf \right]$.

 Besides the cyclic assignment, the  repetition assignment\footnote{\label{foot:rep division}Under the repetition assignment, the number of workers $\Nsf$ should be a multiple of $\Nsf - \Nsf_{\rm r} + \msf$. We divide the $\Nsf$ workers into $\frac{\Nsf}{\Nsf - \Nsf_{\rm r} + \msf}$ groups, each consisting of $\Nsf - \Nsf_{\rm r} + \msf$ workers. 
Within each group,  the same $\frac{\Ksf}{\Nsf}(\Nsf - \Nsf_{\rm r} + \msf)$ datasets are assigned to each worker, where there does not exist any common  dataset assigned to more than one group. 
Thus the datasets assigned to each worker  in group $i$ is 
\begin{align*}
&\Zc'_i:= \underset{p \in \left[0:  \frac{\Ksf}{\Nsf} -1 \right]}{\bigcup}   \big\{\text{Mod}(1+(\Nsf - \Nsf_{\rm r} + \msf)(i-1),\Nsf)+ p  \Nsf ,\\& \text{Mod}(2+(\Nsf - \Nsf_{\rm r} + \msf)(i-1),\Nsf)+  p  \Nsf , \ldots,  \text{Mod}((\Nsf - \Nsf_{\rm r} + \msf)+ \\&   (\Nsf - \Nsf_{\rm r} + \msf)(i-1),\Nsf)+ p  \Nsf  \big\}. 
\end{align*}}  has also been used in  the  distributed gradient descent problem  and related problems~\cite{pmlr-v70-tandon17a,replicationcode2020}. In the following, we provide  the optimal communication cost under the repetition assignment.
\begin{thm} \label{cor: rep assignment}
For the  $(\Ksf,\Nsf,\Nsf_{\rm r}, \Ksf_{\rm c})$ distributed linearly separable computation problem with $ (\Nsf - \Nsf_{\rm r} + \msf) | \Nsf$,
 the optimal communication cost under the repetition assignment  is  
\begin{align}
    \Rsf_{\text{rep}}  = \frac{\Nsf_{\rm r}\min\{\Ksf_{\rm c}, \frac{\Ksf}{\Nsf}(\Nsf -\Nsf_{\rm r} +\msf)\}}{\msf}. \label{eq:convese rep 0}
\end{align}


We then compare the proposed computing scheme under the cyclic assignment with the optimal computing scheme under the repetition assignment: 
\begin{itemize}
\begin{subequations}
    \item when  $\Nsf_{\rm r} = \msf$, we have 
\begin{align}
  \Rsf_{\text{ach}} = \Rsf_{\text{rep}} =  \Ksf_{\rm c};
\end{align} 
    \item when  $\Ksf_{\rm c} \in \left[\frac{\Ksf}{\Nsf} \right]$, we have   
\begin{align}
 \Rsf_{\text{ach}} = \Rsf_{\text{rep}} = \frac{\Ksf_{\rm c} \Nsf_{\rm r}}{\msf}; 
\end{align}
    \item when  $ \Nsf_{\rm r} \neq  \msf, \Ksf_{\rm c} \in \left( \frac{\Ksf}{\Nsf}: \frac{\Ksf}{\Nsf}(\Nsf-\Nsf_{\rm r}+\msf) \right]  $, we have 
\begin{align}
    \frac{\Rsf_{\text{ach}}}{\Rsf_{\text{rep}}} = \frac{\msf}{\msf + \usf - 1} \le 1, \label{neq: 1}
\end{align}
since $\usf \ge 1$;
    \item when  $ \Nsf_{\rm r} \neq  \msf$, $\Ksf_{\rm c} \in \left(  \frac{\Ksf}{\Nsf}(\Nsf-\Nsf_{\rm r}+\msf): \right.$ $\left. \frac{\Ksf}{\Nsf}(\Nsf_{\rm r} - \msf + 1) \right]$\footnote{\label{foot: divide}Note that $\Nsf_{\rm r} \neq \msf,$ we have $\Nsf - \Nsf_{\rm r} + \msf \neq \Nsf,$ and $(\Nsf - \Nsf_{\rm r} + \msf) | \Nsf$, hence, $\Nsf - \Nsf_{\rm r} + \msf \le \frac{\Nsf}{2}.$ Thus, we have $\Nsf-\Nsf_{\rm r}+\msf \le \frac{\Nsf}{2} \le \Nsf - (\Nsf-\Nsf_{\rm r}+\msf) \textless \Nsf_{\rm r} - \msf + 1.$}, we have 
\begin{align}
 \frac{\Rsf_{\text{ach}}}{\Rsf_{\text{rep}}} = \frac{\usf\msf}{(\msf+ \usf- 1)(\Nsf -\Nsf_{\rm r} + \msf)} \le 1, \label{neq: 2}
\end{align}
since $\frac{\usf}{\msf+\usf-1} \le 1$ and $\frac{\msf}{\Nsf -\Nsf_{\rm r}+\msf} \le 1$;
    \item when  $ \Nsf_{\rm r} \neq  \msf, \Ksf_{\rm c} \in \left(  \frac{\Ksf}{\Nsf}(\Nsf_{\rm r} - \msf + 1) : \Ksf \right]$, we have 
\begin{align}
 \frac{\Rsf_{\text{ach}}}{\Rsf_{\text{rep}}} = \frac{\usf\msf}{\Nsf_{\rm r}(\Nsf - \Nsf_{\rm r} +\msf)} \leq 1,  \label{neq: 3}
\end{align}
since $
    \Nsf_{\rm r}(\Nsf-\Nsf_{\rm r}+\msf) - \usf\msf \ge \Nsf_{\rm r}(\Nsf-\Nsf_{\rm r}+\msf) - \Nsf\msf =(\Nsf_{\rm r} - \msf)(\Nsf -\Nsf_{\rm r}) 
    \ge 0. 
$
\end{subequations}
\end{itemize}
\hfill $\square$
\end{thm}

In one word,  the proposed computing   scheme under the cyclic assignment generally requires a lower communication cost than the scheme under repetition assignment, while the multiplicative gap between them can be unbounded.

\subsection{Numerical Evaluations}
We provide some numerical evaluations of the proposed computing scheme. 
The benchmark scheme is to  repeat the computing scheme in~\cite{CCG} $\Ksf_{\rm c}$ times, whose communication cost is thus  $ \frac{\Nsf_{\rm r} \Ksf_{\rm c}}{\msf}$.  It can be seen from Fig.~\ref{fig:m=3} and Fig.~\ref{fig:Kc=4} that our proposed scheme significantly reduces the communication cost compared to the benchmark scheme and the computing scheme under repetition assignment in Theorem~\ref{cor: rep assignment}, while coinciding with the converse bound under the cyclic assignment in Theorem~\ref{thm:converse} when $\Ksf=\Nsf$. 

\begin{figure}[t]
    \centering
    \begin{subfigure}[t]{\linewidth}
        \centering
        \includegraphics[scale=0.5]{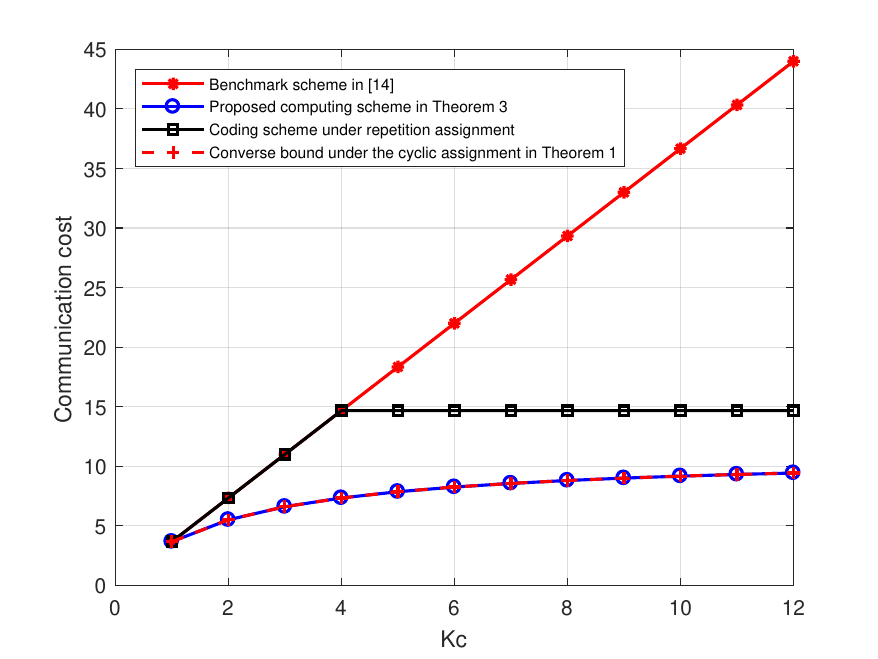}
        \caption{\small Communication costs for $(\Ksf=\Nsf=12, \Nsf_{\rm r} =11, \msf = 3, \Ksf_{\rm c}\in [12])$.}
        \label{fig:m=3}
    \end{subfigure}
    \vskip 0.5em
    \begin{subfigure}[t]{\linewidth}
        \centering
        \includegraphics[scale=0.5]{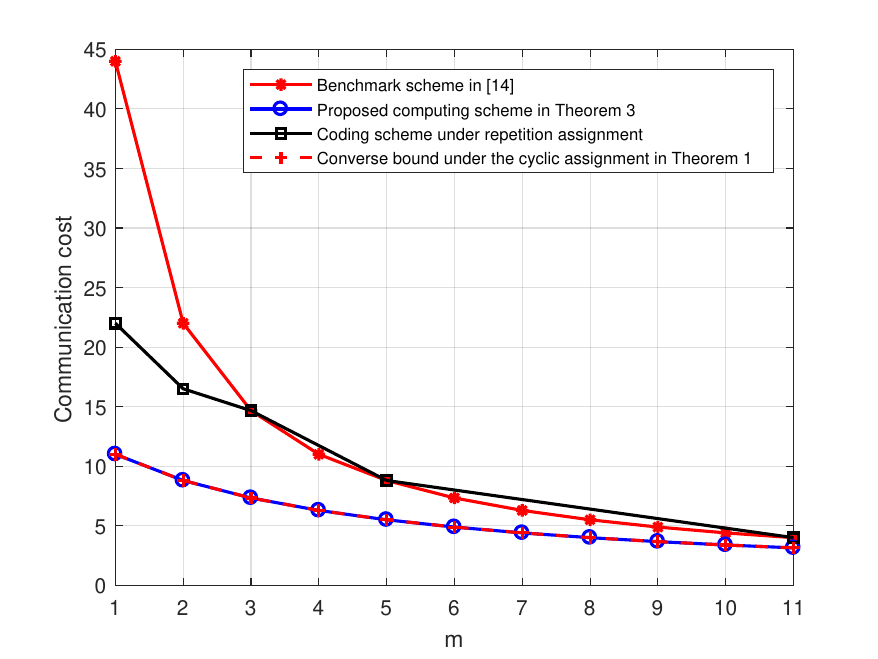}
        \caption{Communication costs for $(\Ksf=\Nsf=12, \Nsf_{\rm r} =11, \msf\in [11], \Ksf_{\rm c} =4)$.}
        \label{fig:Kc=4}
    \end{subfigure}
    \caption{\small Numerical evaluations for $\Ksf=\Nsf=12, \Nsf_{\rm r} =11$.}
    \label{fig:Simulation Result}
\end{figure}

\section{Proof of Theorem~\ref{thm:main achievable scheme}: New achievable distributed computing scheme}
\label{sec:Achievable coding scheme}
In this section, the proposed distributed coding scheme under the cyclic assignment in Theorem~\ref{thm:main achievable scheme} will be described based on the strategy of interference alignment. 

We will consider three cases, $\Ksf_{\rm c} \in \left[\frac{\Ksf}{\Nsf} \right]$, $\Ksf_{\rm c}\in \left( \frac{\Ksf}{\Nsf}(\Nsf_{\rm r}-\msf+1 ) : \Ksf \right]$, and $\Ksf_{\rm c}\in \left( \frac{\Ksf}{\Nsf} : \frac{\Ksf}{\Nsf}(\Nsf_{\rm r}-\msf+1) \right]$, respectively. Note that our major contribution is on the case 
  $\Ksf_{\rm c}\in \left( \frac{\Ksf}{\Nsf}  : \frac{\Ksf}{\Nsf}(\Nsf_{\rm r}-\msf+1) \right]$ to prove~\eqref{eq:achie case 2},
  since the computing scheme for $\Ksf_{\rm c} \in \left[\frac{\Ksf}{\Nsf} \right]$ is obtained by repeating the computing scheme in~\cite{CCG} $\Ksf_{\rm c}$ times, and the computing scheme for $\Ksf_{\rm c}\in \left( \frac{\Ksf}{\Nsf}(\Nsf_{\rm r}-\msf+1 ) : \Ksf \right]$ is obtained by leveraging the computing scheme for $\Ksf_{\rm c}=\frac{\Ksf}{\Nsf}(\Nsf_{\rm r}-\msf+1)$.

\subsection{$\Ksf_{\rm c} \in \left[\frac{\Ksf}{\Nsf} \right]$}

When  $\Ksf_{\rm c} \in \left[\frac{\Ksf}{\Nsf} \right]$, to achieve $\Rsf_{\text{ach}} = \frac{\Ksf_{\rm c} \Nsf_{\rm r}}{\msf}$, we directly repeat the optimal computing schemes for $\Ksf_{\rm c}=1$ in~\cite{CCG} (whose communication cost is $\frac{ \Nsf_{\rm r}}{\msf}$)  $\Ksf_{\rm c}$ times. 

\subsection{$\Ksf_{\rm c}\in \left( \frac{\Ksf}{\Nsf}(\Nsf_{\rm r}-\msf+1 ) : \Ksf \right]$}

\label{sub: extension part}
As   explained  in~\cite[Section IV-C]{m=1}, if the proposed scheme works for the case $\Ksf_{\rm c}=\frac{\Ksf}{\Nsf}(\Nsf_{\rm r}-\msf+1 )$ with communication cost $\Ksf_{\rm c}$, then it also works for the case 
$\Ksf_{\rm c} \in \left[\frac{\Ksf}{\Nsf}(\Nsf_{\rm r}-\msf+1 ): \Ksf \right]$ with communication cost $\Ksf_{\rm c}$ which coincides with~\eqref{eq:achie case 3}. 

More precisely, 
consider a system with parameters $(\Nsf , \Ksf , \Nsf_{\rm r }, \Ksf_{\rm c},  \msf )$, where $\Ksf_{\rm c} \in ( \frac{\Ksf}{\Nsf}(\Nsf_{\rm r}-\msf+1 ) :$ $ \Ksf ]$.  We divide $W_{i}$  where $i \in [\Ksf]$ into  $\binom{{\Ksf_{\rm c}} - 1}{\frac{\Ksf}{\Nsf}\left({{\Nsf_{\rm r}} - \msf + 1} \right)- 1 }$ non-overlapping and equal-length sub-messages and treat the problem into $\binom{{{\Ksf_{\rm c}} }}{\frac{\Ksf}{\Nsf}\left( {{\Nsf_{\rm r}} - \msf + 1} \right) }$ sub-problems with the parameters $(\Nsf , \Ksf , \Nsf_{\rm r },\Ksf_{\rm c} = \frac{\Ksf}{\Nsf}(\Nsf_{\rm r}-\msf+1 ),  \msf)$.  When each sub-problem is solved with the communication cost $\frac{\Ksf_{\rm c}}{\binom{{{\Ksf_{\rm c}} }}{\frac{\Ksf}{\Nsf}\left( {{\Nsf_{\rm r}} - \msf + 1} \right) }}$, the master can recover the task demand and the total communication cost is $\Ksf_{\rm c}$ as same as~\eqref{eq:achie case 3}. 

\subsection{$\Ksf_{\rm c}\in \left( \frac{\Ksf}{\Nsf} : \frac{\Ksf}{\Nsf}(\Nsf_{\rm r}-\msf+1) \right]$}

Let us then focus on the most important case and propose a new computing scheme based on interference alignment. We introduce the main ideas of the proposed scheme through the following example. 

\begin{example}[$(\Ksf, \Nsf, \Nsf_{\rm r}, \Ksf_{\rm c}, \msf)=(6,6,6,2,2)$]
\label{ex:6622}
\rm 
Consider the example where $\Nsf = \Ksf = 6 $, $\Nsf_{\rm r} = 6$, $\Ksf_{\rm c} = \usf = 2$, and $\Msf=\msf = 2$, as illustrated  in Fig.~\ref{fig:system_model_in_original}. In this example, since $\Nsf_{\rm r} =\Nsf$, there is no straggler.

The converse bound under the cyclic assignment in Theorem~\ref{thm:converse} for this example is $\Rsf^{\star}_{\text{cyc}}\geq \frac{ \Nsf_{\rm r} \Ksf_{\rm c}}{\msf+\usf-1}= 4$.  Without loss of generality, we assume that the demand matrix $\bf F$ is
\begin{align}
   {\bf F} = \left[ {\begin{array}{*{20}{c}}
1&1&1&1&1&1\\
1&2&3&4&5&6
\end{array}} \right].
\end{align} 
Note that the computing scheme in~\cite{Same_problem} cannot work for this example.

\paragraph*{Data assignment phase}
We assign the datasets with the cyclic assignment as shown in   Table~\ref{Table 3}. Here, $\Zc_{n} = \{n,\text{Mod}(n+1,6)\}$ for each $n \in [5]$.
\setlength{\tabcolsep}{1mm}{
\begin{table}[!htbp]
\caption{Data assignment}\label{Table 3}
\centering
\begin{tabular}{|c|c|c|c|c|c|}
\hline
{worker 1}&{worker 2}&{worker 3}&{worker 4}&{worker 5}&{worker 6}\\ 
\hline
{$D_1$}&{$D_2$}&{$D_3$}&{$D_4$}&{$D_5$}&{$D_6$}\\

{$D_2$}&{$D_3$}&{$D_4$}&{$D_5$}&{$D_6$}&{$D_1$}\\
\hline
\end{tabular}
\end{table}
}  
\paragraph*{Computing phase}
To attain the  converse bound $\Rsf^{\star}_{\text{cyc}} \geq \frac{ \Nsf_{\rm r} \Ksf_{\rm c}}{\msf+\usf-1}= 4$, we divide each message $W_k$ where $k \in [6]$ into $\dsf := \msf+\usf-1= 3$ non-overlapping and equal-length
sub-messages, $W_{k}=\{W_{k,j}:j\in [\dsf]\}$, where each sub-message contains $\frac{\Lsf}{\dsf}=\frac{\Lsf}{3}$ symbols in $\mathbb{F}_{\qsf}$.  
Denote ${\bf W} = \left[ W_{1,1}; \ldots;W_{6,1};W_{1,2};\ldots;W_{6,3} \right]$. When each worker transmits $\frac{\Ksf_{\rm c}\Lsf}{\dsf} = \frac{2\Lsf}{3}$ symbols (or $\Ksf_{\rm c}=2$ linear combinations of sub-messages), the communication cost will coincide with the converse bound.

After the message partitioning, the computation task becomes $\Ksf_{\rm c} \times \dsf = 6$ linear combinations of sub-messages. Since there are totally $\Nsf_{\rm r} \times \Ksf_{\rm c} = 12$ linear combinations transmitted from workers, the effective demand ( linear space received by the master) could be expressed as ${\bf F}'{\bf W} $, 
where the dimension of ${\bf F}'$ is $\Nsf_{\rm r}\Ksf_{\rm c}\times \Ksf\dsf = 12\times 18$, with the form  
\begin{equation}
 \mathbf{F}' =\left[\begin{array}{c:c:c}
 ({\bf F})_{2 \times 6}  & {\bf 0}_{2 \times 6}  & {\bf 0}_{2 \times 6}   \\ \hdashline
{\bf 0}_{2 \times 6} &  ({\bf F})_{2 \times 6}   & {\bf 0}_{2 \times 6}   \\  \hdashline
 {\bf 0}_{2 \times 6} &   {\bf 0}_{2 \times 6}   &  ({\bf F})_{2 \times 6} \\ \hdashline
 ({\bf V_ {1}})_{6 \times 6}  &  ({\bf V_ {2}})_{6 \times 6}   &   ({\bf V_ {3 }})_{6 \times 6} 
 \end{array}
\right],
\end{equation}
 where we denote $\left[{\bf V_ {1}}, {\bf V_ {2}}, {\bf V_ {3}} \right]$ as the virtual demands, which can be designed carefully to ensure the optimal transmission.    
  
Our  coding strategy is to let each worker $n\in [6]$ send $\Ksf_{\rm c} = 2$ linear combinations of sub-messages (codewords) 
\begin{align}
\sv^{n,1} {\bf F}' {\bf  W} \text{ and } 
\sv^{n,2} {\bf F}' {\bf W}, \label{eq:transmitted linear combinations}
\end{align}  
totally $\frac{\Ksf_{\rm c} \Lsf}{\dsf}=\frac{2\Lsf}{3}$ symbols,  
   coinciding with the converse bound. 
 Note that $\sv^{n,1} $ and $\sv^{n,2}$ are vectors with length $\Nsf_{\rm r}\Ksf_{\rm c} =12$. 
Denote the overall   coding matrix by   
\begin{align}
    \mathbf{S}= \left[\sv^{1,1};\sv^{1,2};\sv^{2,1};\ldots;\sv^{6,2}\right]. \label{eq:ex2 coding matrix S}
\end{align} 
Note that worker $n$ does not have $\{D_k : k \in \overline{\Zc_{n}}\}$ where $\overline{\Zc_{n}} = [6] \setminus \Zc_{n}$. Consequently, it cannnot compute the sub-messages
$\{W_{k,j}: k \in \overline{\Zc_{n}}, j \in [3]\}$, and thus the coefficients corresponding to these sub-messages should be zero in~\eqref{eq:transmitted linear combinations}. (i.e. the $i^{\text{th}}$ element of $\sv^{n,1} {\bf F}'$ and the $i^{\text{th}}$ element of $\sv^{n,2} {\bf F}'$ are $0$, where $i\in  \overline{\Zc_{n}} \cup (\overline{\Zc_{n}}+6) \cup (\overline{\Zc_{n}}+12)$. )
  
The column-wise sub-matrix of ${\bf F}'$ corresponding to the sub-messages that are unavailable to worker $n$ is given by 
\begin{equation}
  {\bf F}'(n) = \left[ {\begin{array}{*{20}{c}}
\overline{ {\bf F}_{n} }&{\bf 0} &{\bf 0}\\
{\bf 0} & \overline{ {\bf F}_{n} } & {\bf 0}\\
{\bf 0} & {\bf 0} &\overline{ {\bf F}_{n} }\\
\overline{{\bf V}_{1,n}}& \overline{{\bf V}_{2,n}} &\overline{{\bf V}_{3,n}}
\end{array}} \right]_{12 \times 12}, \label{eq: Fn2}   
\end{equation}
where $\overline{{\bf F}_{n}}$ and $\overline{{\bf V}_{i,n}}$ for $i \in [\dsf]$ and $n \in [\Nsf]$, denote the column sub-matrices of ${\bf F}$ and ${\bf V}_{i}$, respectively, containing the columns indexed by $\overline{\Zc_n}$. Then we obtain the following condition 
$\sv^{n,1} {\bf F}'(n) =\sv^{n,2}{\bf F}'(n)= {\bf 0}_{1\times 12}$.
 
Combining with the decodability condition, the objective of the coding scheme is to design the virtual demand ${\bf V}_ {1}, {\bf V}_ {2}$ and ${\bf V}_ {3}$, such that    
the following two conditions are satisfied: 
\begin{itemize}
\item (c1)  for each $n\in [6]$, the rank of ${\bf F}'(n)$ is $10$, such that ${\bf F}'(n)$ contains $ 2$ linearly independent left-zero vectors; let $\sv^{n,1} $ and $\sv^{n,2}$ be these two vectors;
\item (c2)   the matrix ${\bf S}$ defined in~\eqref{eq:ex2 coding matrix S} is of full rank, such that from the answers of all workers, the master can recover $ \mathbf{F}' {\bf W}$.
\end{itemize}

Next, we present the proposed coding scheme from the perspective of interference alignment. 
If we choose the elements in the virtual demands of ${\bf F}'(n)$ uniformly i.i.d. in $\mathbb{F}_{\qsf}$ as the computing scheme in~\cite{m=1},  ${\bf F}'(n)$ is of full rank with high probability, and there is no left null vector; thus 
worker $n$ cannot transmit two linearly independent codewords in terms of its known datasets. 
In order to let worker $n$ transmit two linearly independent codewords, we propose to use interference alignment strategy~\cite{2008Interference_Alignment}. 
More precisely, for each worker $n\in [6]$, we treat the sub-messages it cannot compute as its interferences, i.e., $W_{k,j}$ where $k\in \overline{\Zc_{n}}$ and $j\in [3]$. The column-wise sub-matrix of the effective demand matrix corresponding to these interferences is ${\bf F}'(n)$, with dimension $12 \times 12$. Now we want to `align' the interferences such that the rank of ${\bf F}'(n)$ is $10$ and ${\bf F}'(n)$ contains two linearly independent left null vectors. 

In this example, the interference alignment should be designed to cross different workers to satisfy Conditions (c1) and (c2). 
Our scheme is based on the following key observation: 
under the cyclic assignment, each neighbouring $\Nsf_{\rm r}-\msf-\Ksf_{\rm c} = 2$ workers do not have  $\Ksf_{\rm c}+1=3$ common datasets. 
For example, both workers  $1$ and $2$ do not have $D_{4}, D_5,D_6$.  
Hence,  if we generate one linear reduction equation $\ev_{1}$ on the columns of ${\bf F}'$ with indices in $\{4,5,6,10,11,12,16,17,18\}$, we can reduce the ranks of ${\bf F}'(1)$ and ${\bf F}'(2)$ by $1$ simultaneously. More precisely, let 
$ \ev_1 = (0, 0, 0, e_{1,4}, e_{1,5}, e_{1,6}, 0, 0, 0,e_{1,10}, e_{1,11}, e_{1,12}, 0, 0, 0, e_{1,16}, $ $ e_{1,17}, e_{1,18})$, satisfying 
\begin{align}
{\bf F}' \ev_1^{\text{\rm T}}= {\bf 0}_{12\times 1}. \label{eq:one linear eq}
\end{align}
 In addition, by the first two rows of ${\bf F'}$, it can be seen that $(e_{1,4},e_{1,5},e_{1,6})$ should be a multiple of $(1,-2,1)$, since $\left[ {\begin{array}{*{20}{c}}
 1&1&1\\
 4&5&6
\end{array}} \right] [1,-2,1]^{\text{\rm T}}={\bf 0}_{2\times 1} $.  Similarly,  by the third and fourth   rows of ${\bf F'}$, $(e_{1,10},e_{1,11},e_{1,12})$ 
should also be  a multiple of $(1,-2,1)$; by the fifth and sixth   rows of ${\bf F'}$, $(e_{1,16},e_{1,17},e_{1,18})$
should also be  a multiple of $(1,-2,1)$. 
  Hence, by randomly selecting the multiple numbers, we let $(e_{1,4},e_{1,5},e_{1,6})=(1,-2,1)$, $(e_{1,10},e_{1,11},e_{1,12})=(1, -2, 1)$, and $(e_{1,16},e_{1,17},e_{1,18})=(2, -4, 2)$.
Thus we have 
$$
\ev_{1}= (0, 0, 0, 1, -2, 1, 0, 0, 0, 1, -2, 1, 0, 0, 0, 2, -4, 2).
$$

 Similarly, for each $i\in [6]$, the workers in $\{i,\text{Mod}(i+1,6)\}$ do not have $D_{\text{Mod}(i-1,6)}, D_{\text{Mod}(i-2,6)}, $ $D_{\text{Mod}(i-3,6)}$; we generate a vector $\ev_{i}$, whose elements with indices in $\{\text{Mod}(i-1,6),\text{Mod}(i-2,6),\text{Mod}(i-3,6),\text{Mod}(i-1,6)+6,\text{Mod}(i-2,6)+6,\text{Mod}(i-3,6)+6,\text{Mod}(i-1,6)+12,\text{Mod}(i-2,6)+12,\text{Mod}(i-3,6)+12\}$ are not all $0$ and the other elements are all $0$. 
By ${\bf F}' \ev_i^T= {\bf 0}_{12\times 1}$ and the structure of ${\bf F}'$, the vectors 
\begin{subequations}
    \begin{align}
        &(e_{i,\text{Mod}(i-3,6)},e_{i,\text{Mod}(i-2,6)},e_{i,\text{Mod}(i-1,6)}),\\
        &(e_{i,\text{Mod}(i-3,6)+6},e_{i,\text{Mod}(i-2,6)+6},e_{i,\text{Mod}(i-1,6)+6}),\\
        &(e_{i,\text{Mod}(i-3,6)+12},e_{i,\text{Mod}(i-2,6)+12},e_{i,\text{Mod}(i-1,6)+12})
    \end{align}
    \label{eq:ex2 three null vectors}
\end{subequations}
 should be   left null vectors  of 
 \begin{align}
     \begin{bmatrix} 
1 & 1 & 1  \\
 \text{Mod}(i-3,6) & \text{Mod}(i-2,6) & \text{Mod}(i-1,6)    \\
\end{bmatrix}.\label{eq:matrix111}
 \end{align}
Note that the matrix in~\eqref{eq:matrix111} admits only one linearly independent left null vector. Consequently, each vector in~\eqref{eq:ex2 three null vectors} must be a scalar multiple of this unique null vector. The matrix ${\bf E} = [\ev_1;\ev_2;\ldots;\ev_6]$ is explicitly constructed in~\eqref{eq:E for example_2}. Each vector $\ev_i$, for $i \in [6]$, is designed such that it simultaneously reduces the interference dimension by one for a specific pair of workers. Moreover, each worker participates in exactly two such worker pairs. As a result, the overall interference reduction effect can be captured by the following relation:
\begin{align}
{\bf F}' {\bf E}^{\text{\rm T}} = {\bf 0}_{12\times 6}, \label{eq:equations for example_2}
\end{align}
which implies that the interference dimension is reduced by two for each worker.

 The next step is to solve the last $6$ rows of ${\bf F}'$ (virtual demands) satisfying~\eqref{eq:equations for example_2}. 
This equation is solvable because ${\bf E}^{\text{\rm T}}$ has dimension $18 \times 6$ and is of full rank; thus the left null space of ${\bf E}^{\text{\rm T}}$ contains $18-6 = 12$ linearly independent vectors ($6$ of them are the real demands), which could be exactly the rows of ${\bf F}'$. The explicit form of the matrix ${\bf F}'$ is provided in~\eqref{eq:example F'_2}. 
  
Finally, as described above, after determining  ${\bf F}'$, we can then determine $\sv^{n,1}$ and $\sv^{n,2}$  for each worker $n\in [6]$. The realizations of  ${\bf S}$ is given in~\eqref{eq:example S_2} and it can be checked that ${\bf S}$ is of full rank, satisfying Condition (c2).

\paragraph*{Decoding phase}
 Since ${\bf S}$ is of full rank, by 
 receiving ${\bf S} {\bf F}'  {\bf W}$, the master can multiple it by the first $6$ rows of the inverse of ${\bf S}$ to recover the computation task. 

In conclusion, the above computing scheme is decodable and optimal under the constraint of
the cyclic assignment. 
\hfill $\square$ 
\end{example}

We are now ready to generalize the proposed scheme in the above two examples. 
Note that when $\Ksf_{\rm c}\in [\frac{\Ksf}{\Nsf}\usf+1: \frac{\Ksf}{\Nsf}(\usf+1)-1]$ for some $\usf\in [\Nsf_{\rm r}-\msf+1]$, we add $\frac{\Ksf}{\Nsf}(\usf+1)-\Ksf_{\rm c}$ virtual demands and thus the communication cost 
is the same as the proposed scheme for the $(\Ksf,\Nsf,\Nsf_{\rm r}, \frac{\Ksf}{\Nsf}(\usf+1), \msf)$ distributed linearly separable computation problem, coinciding with the required communication cost in~\eqref{eq:achie case 2}. Hence, in the following we only consider the case   $\Ksf_{\rm c}=\frac{\Ksf}{\Nsf}\usf$ for each $\usf\in [\Nsf_{\rm r}-\msf+1]$.

 \paragraph{Data assignment phase} Datasets are assigned to the workers under the cyclic assignment, and each worker $n \in [\Nsf]$ has datasets with the indices in $\Zc_{n}$ given in~\eqref{eq:cyclic assignment n divides k}. 
 \paragraph{Computing phase} From the converse bound in  Theorem~\ref{thm:converse}, when $\usf \in [\Nsf_{\rm r} - \msf + 1]$, we have   $\Rsf^{*}_{cyc} \ge \frac{{\Nsf_{\rm r}\usf\Ksf}}{\Nsf(\msf + \usf - 1)}$. 
 So we divide each  message  $W_{k}$ where $ k \in [\Ksf]$   into $\dsf := \msf + \usf - 1$ non-overlapping and equal length sub-messages, and then the task function becomes $\Ksf_{\rm c} \dsf  $ linear combinations of these sub-messages. To coincide with the converse bound, each worker transmits $\frac{\Ksf}{\Nsf}\usf$ linear combinations of sub-messages. 
 We generate $\frac{\Ksf}{\Nsf}\usf \Nsf_{\rm r}-\frac{\Ksf}{\Nsf}\usf\dsf$ virtual demanded linear combinations of sub-messages, such that the  
  effective demand can be written as ${\bf F}'{\bf W}$, where ${\bf W} = [
W_{1,1};\ldots;W_{\Ksf,1};W_{1,2};\ldots;W_{\Ksf,\dsf}]$ and ${\bf F}'$ with dimension $\frac{\Ksf}{\Nsf}\usf\Nsf_{\rm r} \times \Ksf\dsf$ is with the form

\begin{align}
\left[\begin{array}{c:c:c:c}
 ({\bf F})_{\frac{\Ksf}{\Nsf}\usf \times \Ksf}  & {\bf 0}_{\frac{\Ksf}{\Nsf}\usf \times \Ksf}  & \cdots & {\bf 0}_{\frac{\Ksf}{\Nsf}\usf \times \Ksf}   \\ \hdashline
{\bf 0}_{\frac{\Ksf}{\Nsf}\usf \times \Ksf} &  ({\bf F})_{\frac{\Ksf}{\Nsf}\usf \times \Ksf}   & \cdots & {\bf 0}_{\frac{\Ksf}{\Nsf}\usf \times \Ksf}   \\ \hdashline 
 \vdots   & \vdots  &  \ddots& \vdots \\ \hdashline
 {\bf 0}_{\frac{\Ksf}{\Nsf}\usf \times \Ksf} &   {\bf 0}_{\frac{\Ksf}{\Nsf}\usf \times \Ksf}    & \cdots &  ({\bf F})_{\frac{\Ksf}{\Nsf}\usf \times \Ksf} \\ \hdashline
 ({\bf V_ {1}})_{ \frac{\Ksf\usf}{\Nsf}(\Nsf_{\rm r} - \dsf) \times \Ksf}  &  ({\bf V_ {2}})_{ \frac{\Ksf\usf}{\Nsf}(\Nsf_{\rm r} - \dsf) \times \Ksf}  &   \cdots   &   ({\bf V_ {\dsf }})_{ \frac{\Ksf\usf}{\Nsf}(\Nsf_{\rm r} - \dsf) \times \Ksf} 
 \end{array}
\right], 
\label{eq:general form of F'}
\end{align}
 where   $({{\bf V}_ i})_{ \frac{\Ksf}{\Nsf}\usf(\Nsf_{\rm r} - \dsf) \times \Ksf}, i \in [\dsf]$ are the coefficients 
 of  virtual demands; i.e.,  the first $\Ksf_{\rm c}\dsf$ rows of  ${\bf F}'$ correspond to the real demands and the last $\frac{\Ksf}{\Nsf}\usf \Nsf_{\rm r}-\Ksf_{\rm c}\dsf$ rows of ${\bf F}'$ correspond to the virtual demands. 
The selection on the elements in the   last $\frac{\Ksf}{\Nsf}\usf \Nsf_{\rm r}-\Ksf_{\rm c}\dsf$ rows of ${\bf F}'$ is the key step in our proposed computing scheme, which will be clarified later. 

 We let   each worker $n\in [\Nsf]$ transmit $\Ksf_{\rm c}$ linear combinations of the rows in ${\bf F}'{\bf W}$; each transmitted linear combination can be written as 
${\sv}^{n,j}{\bf F}'{\bf W}$ where ${\sv}^{n,j},n \in [\Nsf], j \in [\frac{\Ksf}{\Nsf}\usf]$ are defined as the coding vectors with length $\frac{\Ksf}{\Nsf}\usf\Nsf_{\rm r}$. 
Considering all the transmissions by all workers,  the coding matrix ${\bf S}$ is defined as:
\begin{align}
\mathbf{S}=  \left[\sv^{1,1};\ldots;\sv^{1,\frac{\Ksf}{\Nsf}\usf};\sv^{2,1};\ldots;\sv^{\Nsf,\frac{\Ksf}{\Nsf}\usf} \right]_{\Ksf\usf \times \frac{\Ksf}{\Nsf}\usf\Nsf_{\rm r}}.
\end{align} 
Worker $n \in [\Nsf]$ can only compute the messages $W_k$ where  $k\in \Zc_n$, so that the codewords sent by worker $n$ will not contain sub-messages $W_{i,j}$ where $i\notin \Zc_n$ and $j\in [\dsf]$.  The sub-messages that worker $n$ cannot compute are treated as its interferences, and the column-wise sub-matrix of ${\bf F}'$ corresponding to the sub-messages worker $n$ cannot compute is denoted by ${\bf F}'(n)$,
\begin{equation}
  {\bf F}'(n) = \left[ {\begin{array}{*{20}{c}}
\overline{ {\bf F}_{n} }& \cdots &{\bf 0}\\
\vdots & \ddots & \vdots\\
{\bf 0} & \cdots &\overline{ {\bf F}_{n} }\\
\overline{{\bf V}_{1,n}}& \cdots &\overline{{\bf V}_{\dsf,n}}
\end{array}} \right]_{\frac{\Ksf}{\Nsf}\usf\Nsf_{\rm r} \times \frac{\Ksf}{\Nsf}(\Nsf_{\rm r} - \msf)\dsf}, \label{F'n general}   
\end{equation}
where we recall that $\overline{ {\bf F}_{n}}:={\bf F}^{\left(\overline{  \Zc_{n} }  \right)_{\rm c}}$ is the column-wise sub-matrix of ${\bf F}$ containing the columns in $\overline{  \Zc_{n} } $, and $\overline{{\bf V}_{i,n}}, i \in [\dsf]$ is the column-wise sub-matrix of ${\bf V}_{i}$ containing the columns in $\overline{  \Zc_{n} } $.  
Thus each ${\sv}^{n,j}$ where $j \in [\frac{\Ksf\usf}{\Nsf}]$ should be a left null vector of ${\bf F}'(n)$. 

 \paragraph{Decoding phase} For any set of workers   $\Ac\subseteq [\Nsf]$ where $|\Ac|= \Nsf_{\rm r}$, the  matrix including the  coding vectors of the workers in $\Ac$ is denoted by
  $\mathbf{S}^{\Ac}$ (recall that  $\Ac(i)$ represents the $i^{\text{th}}$ smallest element of $\Ac$), where   
\begin{align}
\mathbf{S}^{\Ac}= \left[\sv^{\Ac(1),1};\ldots;\sv^{\Ac(1),\frac{\Ksf}{\Nsf}\usf};\sv^{\Ac(2),1};\ldots;\sv^{\Ac(\Nsf_{\rm r}),\frac{\Ksf}{\Nsf}\usf} \right]_{\frac{\Ksf}{\Nsf}\usf\Nsf_{\rm r} \times \frac{\Ksf}{\Nsf}\usf\Nsf_{\rm r}}. 
\end{align}

We need to guarantee that matrix ${\bf S}^{\Ac}$ is a non-singular matrix  so that 
when the master receives $\mathbf{S}^{\Ac}{\bf F}'{\bf W}$, the computation task can be recovered through computing $(\mathbf{S}^{\Ac})^{-1}\mathbf{S}^{\Ac}{\bf F}'{\bf W}$ .
 
\paragraph{Selection on the virtual demands in ${\bf F}'$}  Next, we will describe the core of our computing scheme, the selection on   the elements in the   last $\frac{\Ksf}{\Nsf}\usf (\Nsf_{\rm r}-\dsf)$ rows of ${\bf F}'$. Since the resulting  ${\bf F}'$ should satisfy the following two conditions for the sake of successful encoding and decoding:  
\begin{itemize}
\item (c1) For each $n\in [\Nsf]$, the rank of ${\bf F}'(n)$ should be $\frac{\Ksf\usf}{\Nsf}(\Nsf_{\rm r}-1)$, such that ${\bf F}'(n)$ contains $\frac{\Ksf\usf}{\Nsf}$ linearly independent left null vectors; let $\sv^{n,1}, \ldots,  \sv^{n,\frac{\Ksf\usf}{\Nsf}}$ be these $\frac{\Ksf\usf}{\Nsf}$ vectors.
\item (c2) For any  set $\Ac \subseteq [\Nsf], |\Ac|=\Nsf_{\rm r}$, the   matrix $\mathbf{S}^{\Ac}$ is of full rank.
\end{itemize}
 
We will consider three cases respectively:  $\Ksf_{\rm c}=\frac{\Ksf}{\Nsf}(\Nsf_{\rm r}-\msf+1)$, $\Ksf_{\rm c} = \frac{\Ksf}{\Nsf}(\Nsf_{\rm r}-\msf)$ and $\Ksf_{\rm c} \in \left(\frac{\Ksf}{\Nsf} :\frac{\Ksf}{\Nsf}(\Nsf_{\rm r} - \msf-1)\right]$. 
 
When $\Ksf_{\rm c} =\frac{\Ksf}{\Nsf}(\Nsf_{\rm r}-\msf+1)$, i.e., $\usf=\Nsf_{\rm r}-\msf+1$, we have $\dsf=\msf+\usf-1=\Nsf_{\rm r}$. Hence, 
the dimension of ${\bf F}'$ is  $\frac{\Ksf}{\Nsf}\usf\Nsf_{\rm r} \times \Ksf\dsf= \frac{\Ksf}{\Nsf}\usf \dsf \times \Ksf\dsf$.
Recall that the number of real demanded linear combinations of sub-messages is $\Ksf_{\rm c} \dsf= \frac{\Ksf}{\Nsf}\usf \dsf$; thus in this case,  ${\bf F}' $ does not contain virtual demands and  can be written as    
  \begin{align}
  {\bf F}'=
 \left[\begin{array}{c:c:c:c}
 ({\bf F})_{\frac{\Ksf}{\Nsf}\usf \times \Ksf}  & {\bf 0}_{\frac{\Ksf}{\Nsf}\usf \times \Ksf}  & \cdots & {\bf 0}_{\frac{\Ksf}{\Nsf}\usf \times \Ksf}   \\ \hdashline
{\bf 0}_{\frac{\Ksf}{\Nsf}\usf \times \Ksf} &  ({\bf F})_{\frac{\Ksf}{\Nsf}\usf \times \Ksf}   & \cdots & {\bf 0}_{\frac{\Ksf}{\Nsf}\usf \times \Ksf}   \\ \hdashline 
 \vdots   & \vdots  &  \ddots& \vdots \\ \hdashline
 {\bf 0}_{\frac{\Ksf}{\Nsf}\usf \times \Ksf} &   {\bf 0}_{\frac{\Ksf}{\Nsf}\usf \times \Ksf}    & \cdots &  ({\bf F})_{\frac{\Ksf}{\Nsf}\usf \times \Ksf}  
 \end{array}
\right]_{\frac{\Ksf}{\Nsf}\usf \dsf \times \Ksf\dsf}. 
\label{F(Kc)}
 \end{align}

For each worker  $n$ where $n\in [\Nsf]$, the number of messages it cannot compute is $\Ksf - \Msf= \frac{\Ksf}{\Nsf}(\usf-1)$. 
Note that  the dimension of $\overline{ {\bf F}_{n} }$ is $ \frac{\Ksf\usf}{\Nsf} \times \frac{\Ksf(\usf-1)}{\Nsf}$ and $\overline{ {\bf F}_{n} }$ is full-rank with high probability.
Hence, there are exactly $\frac{\Ksf}{\Nsf}$ linearly independent left null  vectors of $\overline{ {\bf F}_{n} }$, which are denoted by  ${\bf s}_{n,k}, k \in [\frac{\Ksf}{\Nsf}]$.
Then the coding vectors of worker $n$, denoted as $\{\sv^{n,(k-1)\usf + j}, \forall k \in [\frac{\Ksf}{\Nsf}], j \in [\usf]\}$, can be designed as
\begin{align*}
[m_{n,k,j,1}{\bf s}_{n,k}, m_{n,k,j,2}{\bf s}_{n,k},\ldots, m_{n,k,j,\dsf}{\bf s}_{n,k} ]_{1 \times \frac{\Ksf\usf}{\Nsf}\dsf},
\end{align*}
where each $m_{n,k,j,i}, i \in [\dsf]$   is selected uniformly i.i.d. over $\mathbb{F}_{\qsf}$. 
 Note that the coding vectors of worker $n$ are linearly independent with high probability since $\dsf \geq \usf$.

When $\Ksf_{\rm c} = \frac{\Ksf}{\Nsf}(\Nsf_{\rm r}-\msf) $, i.e., $\usf=\Nsf_{\rm r}-\msf$, the number of rows corresponding to virtual demands in ${\bf F}'$ is $\frac{\Ksf\usf}{\Nsf}\Nsf_{\rm r}-\Ksf_{\rm c} \dsf=\frac{\Ksf\usf}{\Nsf} $. Thus 
${\bf F}'$ can be written as
\begin{align}
 \mathbf{F}' =\left[\begin{array}{c:c:c:c}
 ({\bf F})_{\frac{\Ksf}{\Nsf}\usf \times \Ksf}  & {\bf 0}_{\frac{\Ksf}{\Nsf}\usf \times \Ksf}  & \cdots & {\bf 0}_{\frac{\Ksf}{\Nsf}\usf \times \Ksf}   \\ \hdashline
{\bf 0}_{\frac{\Ksf}{\Nsf}\usf \times \Ksf} &  ({\bf F})_{\frac{\Ksf}{\Nsf}\usf \times \Ksf}   & \cdots & {\bf 0}_{\frac{\Ksf}{\Nsf}\usf \times \Ksf}   \\ \hdashline 
 \vdots   & \vdots  &  \ddots& \vdots \\ \hdashline
 {\bf 0}_{\frac{\Ksf}{\Nsf}\usf \times \Ksf} &   {\bf 0}_{\frac{\Ksf}{\Nsf}\usf \times \Ksf}    & \cdots &  ({\bf F})_{\frac{\Ksf}{\Nsf}\usf \times \Ksf} \\ \hdashline
 ({\bf V_ {1}})_{\frac{\Ksf}{\Nsf}\usf \times \Ksf}  &  ({\bf V_ {2}})_{\frac{\Ksf}{\Nsf}\usf \times \Ksf}  &   \cdots   &   ({\bf V_ {\dsf }})_{\frac{\Ksf}{\Nsf}\usf \times \Ksf} 
 \end{array}
\right]_{\frac{\Ksf}{\Nsf}\usf\Nsf_{\rm r} \times \Ksf\dsf}
, \label{eq:second case F'}
\end{align} 
 where ${\bf V}_ i, i \in [\dsf]$ are the coefficients of virtual demands. In this case, we select each element in  ${\bf V}_ i, i \in [\dsf]$ 
uniformly i.i.d. over $\mathbb{F}_{\qsf}$. For each worker $n\in [\Nsf]$, the    dimension of ${\bf F}'(n)$ is $\frac{\Ksf}{\Nsf}\usf\dsf  \times (\Ksf - \Msf)(\Nsf_{\rm r} - 1) =  \frac{\Ksf}{\Nsf}\usf\Nsf_{\rm r} \times \frac{\Ksf}{\Nsf}\usf(\Nsf_{\rm r} - 1) $. Obviously, ${\bf F}'(n)$ is of full rank with high probability. Thus 
the left null space of ${\bf F}'(n)$ contains  exactly  $\frac{\Ksf}{\Ksf}\usf$ left null vectors with high probability. 
We let the  $\frac{\Ksf}{\Nsf}\usf$ coding vectors of worker $n$ be these  $\frac{\Ksf}{\Nsf}\usf$ left null vectors.

Finally, we consider the case $\Ksf_{\rm c} = \frac{\Ksf}{\Nsf}\usf $ where $\usf \in [\Nsf_{\rm r} - \msf-1]$; in this case  ${\bf F}'$ and $\mathbf{F}'(n)$ are shown as in~\eqref{eq:general form of F'} and~\eqref{F'n general}, respectively. 
Since $\usf \in [\Nsf_{\rm r} - \msf-1]$,  we have (recall that the case $\msf=1$ has been solved in~\cite{m=1} and thus here we consider $\msf>1$)
\begin{subequations}
\begin{align}
&(\Nsf_{\rm r}-\msf)(\msf-1)>\usf(\msf-1)\\
&\Longrightarrow  (\Nsf_{\rm r}-\msf)(\msf+\usf-1)>\usf (\Nsf_{\rm r}-1) \\
&\Longrightarrow \frac{\Ksf\usf}{\Nsf}\Nsf_{\rm r} - \frac{\Ksf}{\Nsf}(\Nsf_{\rm r} - \msf)\dsf < \frac{\Ksf}{\Nsf}\usf. \label{eq:not enough space}
\end{align}
\end{subequations}
If we select the coefficients of virtual demands uniformly i.i.d. over $\mathbb{F}_{\qsf}$, ${\bf F}'(n)$ with dimension $\frac{\Ksf}{\Ksf}\usf\Nsf_{\rm r} \times \frac{\Ksf}{\Nsf}(\Nsf_{\rm r} - \msf)\dsf$ is of full rank with high probability  for each $n\in [\Nsf ]$; thus by~\eqref{eq:not enough space}, ${\bf F}'(n)$ cannot contain $\frac{\Ksf}{\Nsf}\usf$ linearly independent left null vectors, which contradicts with Condition (c1). 
Hence, we need to design the coefficients of the virtual demands (i.e., the last $\frac{\Ksf}{\Nsf}\usf \Nsf_{\rm r}-\Ksf_{\rm c}\dsf$ rows of ${\bf F}'$) satisfying Condition (c1); thus 
 there exist $ \frac{\Ksf}{\Nsf}(( \Nsf_{\rm r} - \msf )\dsf - \Nsf_{\rm r}\usf + \usf) = \frac{\Ksf}{\Nsf}(\msf - 1)(\Nsf_{\rm r} - \msf - \usf )$ linearly independent combinations of the columns in ${\bf F}'(n)$  where $n\in [\Nsf]$.  

More precisely, each sub-message that one worker cannot compute can be treated as an `interference' to this worker and the number of sub-messages that  each worker cannot compute is $\frac{\Ksf}{\Nsf}(\Nsf_{\rm r} - \msf)\dsf$.
 For each worker $n\in [\Nsf]$, we should align its interferences such that the dimension of its interferences (i.e., the rank of ${\bf F}'(n)$) is reduced to  $\frac{\Ksf}{\Nsf}( \Nsf_{\rm r}\usf - \usf)$. 
In the proposed computing scheme based on interference alignment, we generate linear equations with the form 
 \begin{align}
 {\bf F}'{\ev^{\text{\rm T}}}  = {\bf 0}_{\frac{\Ksf}{\Ksf}\usf\Nsf_{\rm r} \times 1},\label{eq:en useful}
 \end{align}
where $\ev$ with dimension $1 \times \Ksf\dsf $ represents one  linear reduction vector.  The interference alignment strategy is based on the observation that under the cyclic assignment, each   adjacent $\Nsf_{\rm r} - \msf - \usf$ workers  in a cyclic wrap-around fashion cannot compute a common set of  $\frac{\Ksf}{\Nsf}(\usf+1)$ messages. 
For each $i\in [\Nsf]$, all workers in 
$$
\Nc_i:=\{i, \text{Mod}(i+1,\Nsf), \ldots, \text{Mod}(i+\Nsf_{\rm r} - \msf - \usf-1,\Nsf) \}
$$
do not have $D_{(t-1)\Nsf+j}$ where $j\in \Kc_i:= \{ \text{Mod}(i-1,\Nsf), \text{Mod}(i-2,\Nsf),\ldots, \text{Mod}(i-\usf-1,\Nsf) \} $ and $t\in \left[\frac{\Ksf}{\Nsf} \right]$. In other words,  for each $i\in [\Nsf]$,  each dataset in 
\begin{align*}
\Qc_i:=\cup_{t\in \left[\frac{\Ksf}{\Nsf} \right]} \left( \Kc_i + (t-1)\Nsf \right) 
\end{align*}
 is not assigned to any workers in $\Nc_i$; thus  the sub-messages $W_{k,j}$ where $k\in \Qc_i$ and $j\in [\dsf]$ are treated as interferences to each worker in $\Nc_i$.    
For the sake of interference alignment, we will generate $\frac{\Ksf}{\Nsf} (\msf-1)$ linear reduction vectors  $\ev_{i,t,v}$ where $t\in \left[\frac{\Ksf}{\Nsf} \right]$ and $v\in [\msf-1]$,  simultaneously for the workers in $\Nc_i$. Note that each vector $\ev_{i,t,v}$  leads to one linear equation ${\bf F}' \ev_{i,t,v}^{\text{\rm T}}={\bf 0}_{\frac{\Ksf\usf}{\Nsf}\Nsf_{\rm r} \times 1}$ and its elements with indices in $[\Ksf \dsf] \setminus \underset{j\in [\dsf]}{\cup}  \left( \Qc_{i} + (j-1)\Ksf \right) $ are all $0$.  
  In other words, in $\ev_{i,t,v}$, only the elements with indices in $\underset{j\in [\dsf]}{\cup}  \left( \Qc_{i} + (j-1)\Ksf \right)$ could be non-zero.
  

The construction on each $\ev_{i,t,v}$ is as follows. Since the elements in ${\bf F}$ are generated i.i.d over $\mathbb{F}_{\qsf}$, ${\bf F}^{(\Qc_i)_c}$ is full-rank with high probability. 
In addition, the dimension of   ${\bf F}^{(\Qc_i)_c}$ is  $\frac{\Ksf}{\Nsf}\usf \times \frac{\Ksf}{\Nsf}(\usf+1)$. 
Thus with high probability the right null space of ${\bf F}^{(\Qc_i)_c}$ contains $\frac{\Ksf}{\Nsf}$ linearly independent vectors, denoted by ${\bf y}_{i, 1}, \ldots, {\bf y}_{i, \frac{\Ksf}{\Nsf}}$. 
By the structure of ${\bf F}'$ as illustrated in~\eqref{eq:general form of F'}, for each $j \in [\dsf]$, we can let the sub-vector of $\ev_{i,t,v}$ including the elements  with indices in $\Qc_{i} + (j -1)\Ksf$    be a  multiply of ${\bf y}_{i,t}$, such that   ${\bf F}' \ev_{i,t,v}^{\text{\rm T}}={\bf 0}_{\frac{\Ksf\usf}{\Nsf}\Nsf_{\rm r} \times 1}$ can hold.
Note that, we select each multiply number uniformly i.i.d. over $\mathbb{F}_{\qsf}$.

Under the construction above considering all $i\in [\Nsf]$,
we can see that  each worker $k\in [\Ksf]$ is contained in each of the $(\Nsf_{\rm r} - \msf - \usf )$ sets  $\Nc_{k}, \Nc_{\text{Mod}(k-1,\Nsf)},\ldots, \Nc_{\text{Mod}(k-\Nsf_{\rm r} + \msf + \usf+1,\Nsf)} $, and for each of those worker sets we generate $\frac{\Ksf}{\Nsf} (\msf-1)$ linear reduction vectors simultaneously for all workers in the set. Thus we totally generate $\frac{\Ksf}{\Nsf} (\msf-1) (\Nsf_{\rm r} - \msf - \usf )$ linear reduction vectors for each worker. In Appendix~\ref{sec: proof of linear independence for each worker}, we prove the following lemma.
\begin{lem}
\label{lem:linearly independent for each worker}
For each worker $n \in [\Nsf]$, the generated $\frac{\Ksf}{\Nsf} (\msf-1) (\Nsf_{\rm r} - \msf - \usf )$ linear reduction vectors are linearly independent with high probability.
\hfill $\square$
\end{lem}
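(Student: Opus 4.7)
The plan is to invoke the Schwartz-Zippel lemma. Stack the $R:=\frac{\Ksf}{\Nsf}(\msf-1)(\Nsf_{\rm r}-\msf-\usf)$ vectors $\ev_{i,t,v}$ assigned to worker $n$ as the rows of a matrix $\mathbf{E}_n \in \mathbb{F}_{\qsf}^{R\times \Ksf\dsf}$. The entries of $\mathbf{E}_n$ are polynomial functions of the random parameters, namely the multipliers $\{m_{i,t,v,j}\}$ and (through the null vectors $\mathbf{y}_{i,t}$) the entries of $\mathbf{F}$. Thus it suffices to exhibit one specialization of these parameters under which some $R\times R$ submatrix of $\mathbf{E}_n$ has non-zero determinant; the Schwartz-Zippel lemma then guarantees that $\mathbf{E}_n$ has rank $R$ with probability tending to $1$ as $\qsf\to\infty$.

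To construct such a submatrix, I would exploit the sliding-window structure of the cyclic assignment. Order the $\Nsf_{\rm r}-\msf-\usf$ indices $i$ relevant to worker $n$ in decreasing cyclic order $i_0>i_1>\cdots>i_{\Nsf_{\rm r}-\msf-\usf-1}$ (modulo $\Nsf$). For each $\ell\geq 1$, the set $\Kc_{i_\ell}\setminus\bigcup_{\ell'<\ell}\Kc_{i_{\ell'}}$ consists of a single new index, so $\Qc_{i_\ell}\setminus\bigcup_{\ell'<\ell}\Qc_{i_{\ell'}}$ has exactly $\frac{\Ksf}{\Nsf}$ new coordinates; across the $\dsf$ blocks these yield $\frac{\Ksf}{\Nsf}\dsf\geq \frac{\Ksf}{\Nsf}(\msf-1)$ new column positions, enough to absorb the $\frac{\Ksf}{\Nsf}(\msf-1)$ rows $\{\ev_{i_\ell,t,v}\}_{t,v}$ of level $\ell$. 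Selecting $\frac{\Ksf}{\Nsf}(\msf-1)$ columns from the new coordinates at each level yields an $R\times R$ submatrix of $\mathbf{E}_n$ which, by construction, is block-lower-triangular (any $\ev_{i_{\ell'},t',v'}$ with $\ell'<\ell$ vanishes on the new coordinates of level $\ell$), with diagonal blocks $\mathbf{B}_\ell$ of size $\frac{\Ksf}{\Nsf}(\msf-1)\times\frac{\Ksf}{\Nsf}(\msf-1)$.

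It remains to show that each $\mathbf{B}_\ell$ can be made non-singular. Its entry at row $(t,v)$ and column $(j,c)$ is $m_{i_\ell,t,v,j}\,\tilde{y}_{i_\ell,t,c}$, where $\tilde{\mathbf{y}}_{i_\ell,t}$ denotes the restriction of $\mathbf{y}_{i_\ell,t}$ to the selected new coordinates within one block. A short calculation: if $\mathbf{B}_\ell\boldsymbol{\alpha}=\mathbf{0}$ for column vector $\boldsymbol{\alpha}$ indexed by $(j,c)$, then for every $(t,v)$ one has $\sum_j m_{i_\ell,t,v,j}\beta_{t,j}=0$ with $\beta_{t,j}:=\sum_c \tilde y_{i_\ell,t,c}\,\alpha_{j,c}$; if the $(\msf-1)\times(\msf-1)$ multiplier matrix $N_{i_\ell,t}:=(m_{i_\ell,t,v,j})_{v,j}$ is non-singular, then $\boldsymbol{\beta}_t=\mathbf{0}$, and then $\sum_c \tilde y_{i_\ell,t,c}\alpha_{j,c}=0$ for every $(j,t)$, which forces $\alpha_{j,c}=0$ provided $\{\tilde{\mathbf{y}}_{i_\ell,t}\}_t$ is a basis of $\mathbb{F}_{\qsf}^{\Ksf/\Nsf}$. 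Thus $\mathbf{B}_\ell$ is non-singular whenever (a) every $N_{i_\ell,t}$ is non-singular (a generic condition on the multipliers) and (b) $\{\tilde{\mathbf{y}}_{i_\ell,t}\}_t$ is linearly independent (a generic condition on $\mathbf{F}$ and on the choice of new coordinates).

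The main obstacle is (b): for each level $\ell$ one must select a $\frac{\Ksf}{\Nsf}$-subset of the new coordinates on which the null vectors $\{\mathbf{y}_{i_\ell,t}\}_t$ of $\mathbf{F}^{(\Qc_{i_\ell})_c}$ project bijectively. This is a non-vanishing polynomial condition on $\mathbf{F}$, and can be verified by producing an explicit (e.g.\ Vandermonde-based) realization of $\mathbf{F}$ for which the corresponding minor of the null-space basis matrix is non-zero. Combining all the resulting non-vanishing polynomial conditions and invoking Schwartz-Zippel jointly on the multipliers $\{m_{i,t,v,j}\}$ and on the entries of $\mathbf{F}$ over a sufficiently large field $\mathbb{F}_{\qsf}$ completes the proof.
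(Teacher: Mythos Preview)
Your proposal is correct and follows essentially the same approach as the paper: exploit the sliding-window structure of the cyclic assignment so that each successive index $i_\ell$ contributes ``new'' column positions not touched by the earlier $\ev$-vectors, yielding a block-lower-triangular submatrix whose diagonal blocks have full rank with high probability. The paper's proof in Appendix~\ref{sec: proof of linear independence for each worker} is terser---it simply asserts that the restriction of each level's $\frac{\Ksf}{\Nsf}(\msf-1)$ vectors to its new columns has rank $\frac{\Ksf}{\Nsf}(\msf-1)$ ``by construction'' and ``with high probability''---whereas you select square diagonal blocks and spell out their non-singularity via your conditions (a) on the multiplier matrices and (b) on the null-vector projections; but the underlying idea is identical.
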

By Lemma~\ref{lem:linearly independent for each worker}, Condition (c1) is satisfied.

Next, we define a  matrix $\bf E$   as follows,
\begin{equation}
   {\bf E} = \left[ \ev_{1,1,1};\ldots;\ev_{1,1,\msf -1};\ev_{1,2,1};\ldots;\ev_{1,\frac{\Ksf}{\Nsf},\msf -1};\ldots;\ev_{\Nsf,\frac{\Ksf}{\Nsf},\msf -1} \right],
\end{equation}
with dimension $\Ksf(\msf -1) \times \Ksf\dsf$.


The next step is to solve the virtual demands in ${\bf F}'$ (${\bf V}_ i, \forall i \in [\dsf]$) satisfying,
\begin{equation}
{\bf F}'{\bf E}^{\text{\rm T}} = {\bf 0}_{\frac{\Ksf}{\Nsf}\usf\Nsf_{\rm r} \times  \Ksf\left( \msf -1 \right)}.
\end{equation} 
Since ${\bf E}^{\text{\rm T}}$ is of full rank with dimension $\Ksf\dsf \times  \Ksf\left( \msf -1 \right)$, there are  $\Ksf\dsf -  \Ksf\left( \msf -1 \right) = \Ksf\usf$  independent left null vectors  of ${\bf E}^{\text{\rm T}}$. The first $\frac{\Ksf}{\Nsf}\usf\dsf$ rows in ${\bf F}'$ are  from real demands, and lie in the left null space of ${\bf E}^{\text{\rm T}}$,   by our construction on ${\bf E}^{\text{\rm T}}$. Hence, 
 the   remaining rows   in ${\bf F}'$  (corresponding to  virtual demands) can be designed as  the remaining $\frac{\Ksf}{\Nsf}\usf(\Nsf - \dsf )$  left null vectors  of ${\bf E}^{\text{\rm T}}$. 

Finally, the coding matrix ${\bf S}$ can be obtained  from Condition (c1), where we let  $${{\bf s}^{n,j}}{\bf F}'(n)= {\bf 0}_{1 \times (\Nsf_{\rm r} - \msf)\dsf}, n \in \left[ \Nsf \right],j \in \left[ \frac{\Ksf}{\Nsf}\usf \right]. $$  The decodability proof on the proposed scheme is provided in Appendix~\ref{sec:decodability}, which is based on the Schwartz-Zippel lemma~\cite{Schwartz, Zippel, Demillo_Lipton}.
As a result, the communication cost by the proposed computing scheme is $\frac{\Nsf_{\rm r} \Ksf \usf}{\Nsf(\msf+\usf-1)} $, as given in~\eqref{eq:achie case 2}.

\begin{figure}
    \centering
    \begin{subfigure}[t]{\linewidth}
        \centering
        \includegraphics[scale=0.35]{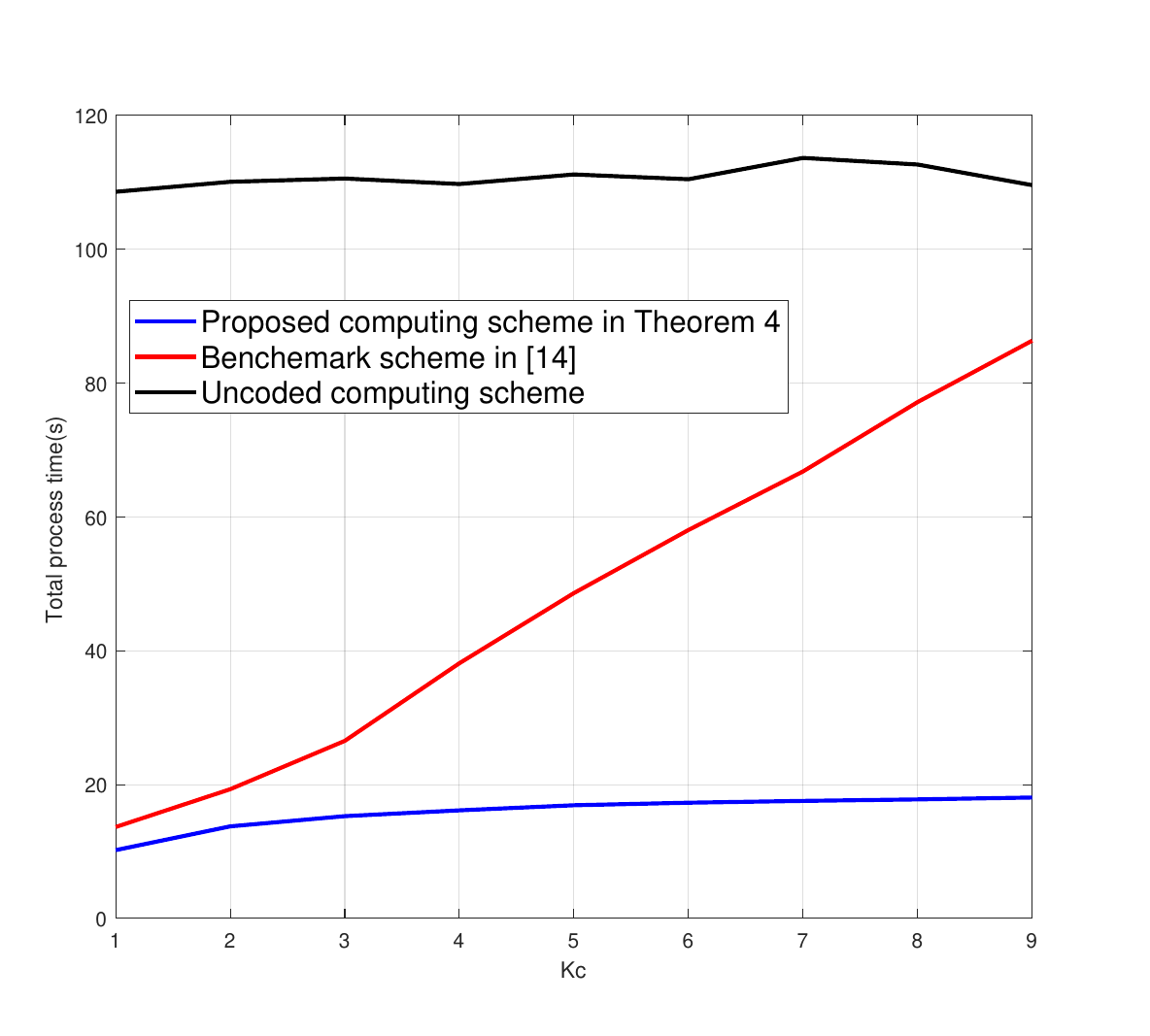}
        \caption{\small $(\Ksf=\Nsf=\Nsf_{\rm r}=12, \Ksf_{\rm c}\in [9], \msf = 3)$}
        \label{fig:m3Kc}
    \end{subfigure}
    \vskip 0.5em
    \begin{subfigure}[t]{\linewidth}
        \centering
        \includegraphics[scale=0.35]{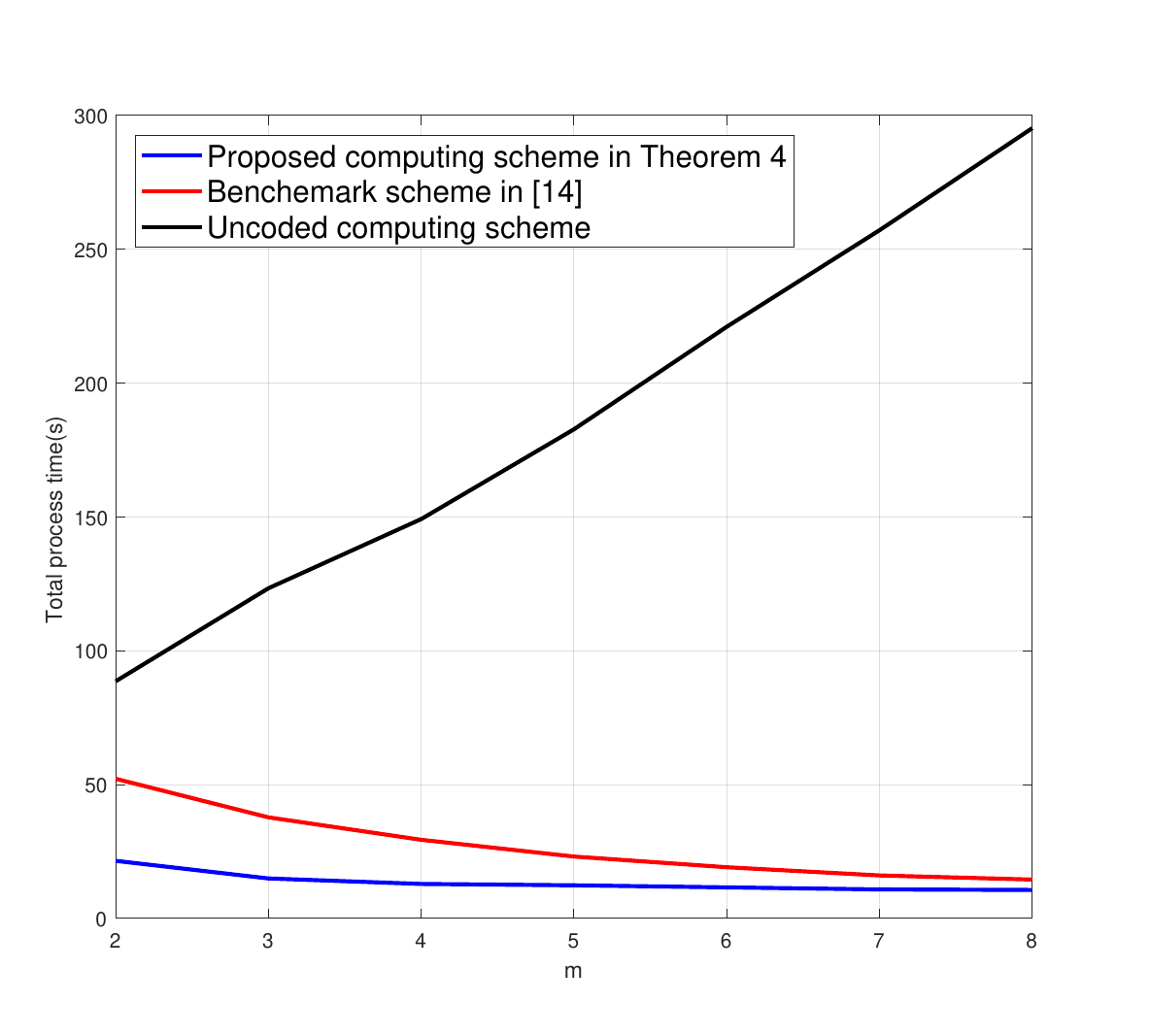}
        \caption{\small $(\Ksf=\Nsf=\Nsf_{\rm r}=12, \Ksf_{\rm c}= 4, \msf \in [2:8])$}
        \label{fig:Kc4m}
    \end{subfigure}
    \caption{\small System computation time of the proposed scheme, the benchmark scheme and the uncoded scheme for $\Ksf_{\rm c} \in (\frac{\Ksf}{\Nsf}:\frac{\Ksf}{\Nsf}(\Nsf_{\rm r} - \msf +1)]$.}
    \label{fig:side-by-side}
\end{figure}
\section{Experimental Results}
We implement our proposed coded computing scheme in Python3.11 by using the MPI4py package over the Tencent Cloud.
As a benchmark scheme, we repeat the computing scheme in~\cite{CCG} by $\Ksf_{\rm c}$ times, where each time by the   computing scheme in~\cite{CCG} the master can recover one linear combination of messages. 
 While the datasets are assigned offline, the experiment focuses on  the distributed computing process time (including computing, encoding, transmission and decoding times) .
\paragraph{Tencent Cloud Setup} We choose Tencent Cloud instances, specifically S6.LARGE16 and S6.MEDIUM4. We choose S6.LARGE16 as the master and S6.MEDIUM4 as the workers. 
These Tencent Cloud instances are equipped with Intel Xeon Ice Lake processors running at a base clock speed of 2.7 GHz with a turbo frequency of 3.3 GHz. All instances used in our experiment are identical in terms of computing power, memory and network resources. The communication speed between the server and users is 1MB/s. To generate the demand matrix and input vectors, we set the field size $\qsf$ to $11$ and the elements in the demand matrix $\mathbb{F}$ are  uniformly and independently chosen  over $\mathbb{F}_{11}$.  In the system configuration represented by $(\Nsf,\Ksf,\Ksf_{\rm c},\msf),$ we employ Monte Carlo methods with $100$ samples and then average the resulting time over these $100$ samples.
\paragraph{Proposed scheme v.s. Benchmark scheme} We compare the distributed computing process times of our proposed scheme, the benchmark scheme and the uncoded scheme that workers transmit uncodewords to the master. 
It can be seen from Fig.~\ref{fig:m3Kc} that, the proposed scheme outperforms than the Benchmark scheme  under the system parameters $(\Ksf=\Nsf=\Nsf_{\rm r}=12, \Ksf_{\rm c}\in [9], \msf = 3)$, where the reduction percentage ranges from $58.5\%$ to $72.8\%$.  From Fig.~\ref{fig:Kc4m} under the system parameters $(\Ksf=\Nsf=\Nsf_{\rm r}=12, \Ksf_{\rm c}= 4, \msf \in [2:8])$,  the reduction percentage of the process time of our scheme compared to the benchmark scheme ranges from $57.1\%$ to $59.7\%$. The experimental results  align with the theoretical perspective that the communication cost of the proposed scheme is significantly less than that of the benchmark scheme, given the same computation cost.

\section{Conclusion}
\label{sec:Conclusion}
In this paper, we considered the fundamental limits of distributed linearly separable computation under cyclic assignment. We proposed a coded computing scheme using interference alignment  which can achieve the (order) optimal communication cost under the cyclic assignment. The proposed scheme will effectively decrease the communication cost by the computing scheme under the repetition assignment. Experimental results over Tencent Cloud showed the improvement of our proposed scheme compared to the benchmark scheme.
On-going works include  the improved computing schemes with combinatorial assignment and the distributed linearly separable computation problem with heterogeneous computation capabilities.

  \appendices

\section{The explicit form of matrices in the example}

\begin{figure*}[t]
\scriptsize
\renewcommand{\arraystretch}{0.8}
\begin{align}
{\bf E} =
\begin{bmatrix}
\ev_{1}\\
\vdots\\
\ev_{6}
\end{bmatrix}
=
\left[
\begin{array}{@{\hskip 1pt}cccccccccccccccccc@{\hskip 1pt}}
 0 & 0 & 0 & 1 & -2 & 1 & 0 & 0 & 0 & 1 & -2 & 1 & 0 & 0 & 0 & 2 & -4 & 2 \\
 0 & 0 & 0 & 0 & 0 & 0 & 1 & 0 & 0 & 0 & -5 & 4 & 0 & 0 & 0 & 0 & 0 & 0 \\
 8 & -10 & 0 & 0 & 0 & 2 & 4 & -5 & 0 & 0 & 0 & 1 & 0 & 0 & 0 & 0 & 0 & 0 \\
 2 & -4 & 2 & 0 & 0 & 0 & 1 & -2 & 1 & 0 & 0 & 0 & 2 & -4 & 2 & 0 & 0 & 0 \\
 0 & 1 & -2 & 1 & 0 & 0 & 0 & 2 & -4 & 2 & 0 & 0 & 0 & 2 & -4 & 2 & 0 & 0 \\
 0 & 0 & 2 & -4 & 2 & 0 & 0 & 0 & 1 & -2 & 1 & 0 & 0 & 0 & 0 & 0 & 0 & 0 \\
\end{array}
\right].
\label{eq:E for example_2}
\end{align}
\end{figure*}

\begin{figure*}
\scriptsize
\renewcommand{\arraystretch}{0.8}
\begin{equation}
\small
{\bf F}' = \left[
\begin{array}{@{\hskip 1pt}cccccccccccccccccc@{\hskip 1pt}}
 1 & 1 & 1 & 1 & 1 & 1 & 0 & 0 & 0 & 0 & 0 & 0 & 0 & 0 & 0 & 0 & 0 & 0 \\
 1 & 2 & 3 & 4 & 5 & 6 & 0 & 0 & 0 & 0 & 0 & 0 & 0 & 0 & 0 & 0 & 0 & 0 \\
 0 & 0 & 0 & 0 & 0 & 0 & 1 & 1 & 1 & 1 & 1 & 1 & 0 & 0 & 0 & 0 & 0 & 0 \\
 0 & 0 & 0 & 0 & 0 & 0 & 1 & 2 & 3 & 4 & 5 & 6 & 0 & 0 & 0 & 0 & 0 & 0 \\
 0 & 0 & 0 & 0 & 0 & 0 & 0 & 0 & 0 & 0 & 0 & 0 & 1 & 1 & 1 & 1 & 1 & 1 \\
 0 & 0 & 0 & 0 & 0 & 0 & 0 & 0 & 0 & 0 & 0 & 0 & 1 & 2 & 3 & 4 & 5 & 6 \\
 -159 & -117 & -104 & -51 & 2 & 1 & 18 & -2 & 0 & 9 & 18 & 18 & 0 & 0 & 18 & 9 & 0 & 18 \\
 -264 & -159 & -123 & -51 & 2 & 1 & 108 & -14 & 2 & 0 & 36 & 18 & 9 & 9 & 9 & 9 & 0 & 18 \\
 0 & 80 & 70 & 36 & 0 & 0 & 180 & -16 & 4 & 18 & 36 & 0 & 18 & 18 & 0 & 0 & 18 & 18 \\
 -170 & -72 & -49 & -14 & 2 & 0 & 144 & -11 & 2 & 0 & 36 & 9 & 18 & 18 & 9 & 9 & 9 & 18 \\
 -49 & 19 & 6 & -1 & 0 & 1 & 90 & -44 & 2 & 18 & 18 & 0 & 9 & 9 & 0 & 18 & 9 & 0 \\
 -88 & -24 & -26 & -16 & 2 & 2 & 90 & -20 & 2 & 18 & 18 & 0 & 18 & 18 & 18 & 18 & 9 & 9 \\
\end{array}
\right].
\label{eq:example F'_2}
\end{equation} 
\end{figure*}

\begin{figure*}
\scriptsize
\renewcommand{\arraystretch}{0.8}
\begin{equation}
\small
   {\bf S} = \left[
\begin{array}{@{\hskip 1pt}cccccccccccccccccc@{\hskip 1pt}}
 -150 & 26 & -9 & 9 & -297 & 45 & 5 & -17 & -7 & 19 & 0 & 3 \\
 -12978 & 2294 & 1746 & -18 & -21438 & 2988 & 632 & -1568 & -559 & 1690 & 150 & 0 \\
 15 & -3 & -108 & 18 & -9 & 0 & 0 & 1 & 1 & -2 & 0 & 1 \\
 1016 & -188 & -4716 & 720 & 306 & -126 & 18 & 64 & 63 & -120 & 30 & 0 \\
 120 & -24 & 216 & -36 & 18 & 0 & 0 & 8 & 3 & -16 & 0 & 8 \\
 -715 & 125 & -423 & 63 & -414 & 0 & 45 & -85 & -22 & 85 & 5 & 0 \\
 -55980 & 9346 & -2394 & 774 & -612 & 0 & 150 & -282 & 143 & 14 & 0 & 18 \\
 -35991 & 6013 & -1449 & 477 & -396 & 0 & 105 & -201 & 83 & 35 & 9 & 0 \\
 -620 & 104 & -306 & 126 & -738 & 0 & 0 & -68 & -43 & 86 & 0 & 32 \\
 -14196 & 1432 & -14046 & 5298 & -10926 & 0 & 384 & -1692 & -781 & 2138 & 192 & 0 \\
 -11140 & 1988 & 10368 & -3528 & -13536 & 1350 & 150 & -646 & -171 & 792 & 0 & 304 \\
 -93492 & 18036 & 34416 & -11160 & -29808 & 990 & 1326 & -3230 & -703 & 3560 & 608 & 0 \\
\end{array}
\right].\label{eq:example S_2}
\end{equation} 
\end{figure*}

\section{Proof of Corollary~\ref{thm: Compared} }
\label{sec: proof of compared}
When $\Ksf = \Nsf,$ we have $\Rsf^{\star}_{\text{cyc}} \geq \frac{\Nsf_{\rm r}\Ksf_{\rm c}}{\msf + \Ksf_{\rm c} -1}, ~\text{if}~ \Ksf_{\rm c} \in \left[ \Nsf_{\rm r} - \msf + 1 \right]$; and we have $\Rsf^{\star}_{\text{cyc}} \geq \Ksf_{\rm c}, ~\text{if}~ \Ksf_{\rm c} \in \left[ \Nsf_{\rm r} - \msf + 1 : \Ksf \right]$ from Theorem~\ref{thm:converse}.
This converse bound can be directly achieved by the proposed scheme in Theorem~\ref{thm:main achievable scheme}.


We then consider the case where $\Nsf$ divides $\Ksf$.

When $\Ksf_{\rm c} \in \left[\frac{\Ksf}{\Nsf} \right],$ we have $\Rsf^{\star}_{\text{cyc}} \geq \frac{\Ksf_{\rm c} \Nsf_{\rm r}}{\msf}$ from Theorem~\ref{thm:converse}, which can be directly achieved by the proposed scheme in Theorem~\ref{thm:main achievable scheme}. 

When $\Ksf_{\rm c} \in \left( \frac{\Ksf}{\Nsf}: \frac{\Ksf}{\Nsf}(\Nsf_{\rm r}-\msf+1) \right],$ we have $\Rsf^{\star}_{\text{cyc}} \geq \frac{\Nsf_{\rm r}\Ksf_{\rm c}}{\msf + \Ksf_{\rm c} -1}$ from Theorem~\ref{thm:converse}, while the proposed scheme in  Theorem~\ref{thm:main achievable scheme} can achieve the communication cost $\Rsf_{\text{ach}} = \frac{\Ksf\Nsf_{\rm r}\usf}{\Nsf(\msf + \Ksf_{\rm c} -1)}$, where $\usf := \lceil \frac{\Ksf_{\rm c}\Nsf}{\Ksf} \rceil$. 
Comparing the converse bound and the achievable communication cost, we find that $\frac{\Ksf\Nsf_{\rm r}\usf}{\Nsf(\msf + \Ksf_{\rm c} -1)} / \frac{\Nsf_{\rm r}\Ksf_{\rm c}}{\msf + \Ksf_{\rm c} -1} = \frac{\Ksf\usf}{\Nsf\Ksf_{\rm c}} < 2$.  The proposed scheme achieves the converse bound within a factor of 2.

When $\Ksf_{\rm c} \in \left( \frac{\Ksf}{\Nsf}(\Nsf_{\rm r}-\msf+1) : \Ksf \right],$ we have $\Rsf^{\star} = \Rsf^{\star}_{\text{cyc}} \geq \Ksf_{\rm c}$ from Theorem~\ref{thm:converse}, which can be achieved by  the proposed scheme in  Theorem~\ref{thm:main achievable scheme}. 

In conclusion, we proved Corollary~\ref{thm: Compared}.

\section{Proof of Lemma~\ref{lem:linearly independent for each worker}}
\label{sec: proof of linear independence for each worker}  
By the symmetry, we only need to prove  that the  $\frac{\Ksf}{\Nsf} (\msf-1) (\Nsf_{\rm r} - \msf - \usf )$ linear reduction vectors
  for  worker $1$ are linearly independent. Note that among $\Nsf_{1}, \ldots, \Nc_{\Nsf} $, worker $1$  is contained in $\Nc_{1},\Nc_{\Nsf}, \Nc_{\Nsf-1},\ldots, \Nc_{\Nsf-\Nsf_{\rm r}+\msf+\usf+2}$, totally $\Nsf_{\rm r} - \msf - \usf $ sets. 
  
By the construction, the $\frac{\Ksf}{\Nsf} (\msf-1)$ linear reduction vectors for the workers in $\Nsf_{1}$, denoted by  $\ev_{1,1,1},\ldots,\ev_{1,\frac{\Ksf}{\Nsf},\msf-1}$, are linearly independent with high probability. In each of these $\frac{\Ksf}{\Nsf} (\msf-1)$ vectors, only the elements with indices in 
\begin{align}
\bigcup_{j\in [\dsf]} \left(\bigcup_{t\in \left[ \frac{\Ksf}{\Nsf}\right]} \left(\{\Nsf,\ldots, \Nsf-\usf+2\} + (t-1) \Nsf \right) +(j-1)\Ksf \right)  \label{eq:Nc N non 0}
\end{align}
 could be non-zero.

 Then we consider  the $\frac{\Ksf}{\Nsf} (\msf-1)$ linear reduction vectors generated for the workers $\Nc_{\Nsf-1}$, denoted by  $\ev_{\Nsf,1,1},\ldots,\ev_{\Nsf,\frac{\Ksf}{\Nsf},\msf-1}$. 
  In each of these $\frac{\Ksf}{\Nsf} (\msf-1)$ vectors, only the elements with indices in 
\begin{align}
\bigcup_{j\in [\dsf]} \left(\bigcup_{t\in \left[ \frac{\Ksf}{\Nsf}\right]} \left(\{\Nsf-1,\ldots, \Nsf-\usf+1\} + (t-1) \Nsf \right)+(j-1)\Ksf  \right)  \label{eq:Nc N-1 non 0}
\end{align}
 could be non-zero. It can be seen that each element  in  
\begin{align}
 \bigcup_{j\in [\dsf]} \left(\bigcup_{t\in \left[ \frac{\Ksf}{\Nsf}\right]} \left(\{\Nsf-\usf+1\} + (t-1) \Nsf \right) +(j-1)\Ksf  \right) \label{eq:Nc in N-1 not in N}
 \end{align} 
 is contained by the set in~\eqref{eq:Nc N-1 non 0} and not contained by the set in~\eqref{eq:Nc N non 0}.
 By the construction,  the column-wise sub-matrix of 
 \begin{align}
\left[ \ev_{\Nsf,1,1};\ldots;\ev_{\Nsf,\frac{\Ksf}{\Nsf},\msf-1}  \right]
\end{align}   including the columns with indices in~\eqref{eq:Nc in N-1 not in N}  has rank equal to $\frac{\Ksf}{\Nsf} (\msf-1)$ with high probability. Thus, the matrix 
\begin{align} \left[ \ev_{1,1,1};\ldots;\ev_{1,\frac{\Ksf}{\Nsf},\msf-1};\ev_{\Nsf,1,1};\ldots;\ev_{\Nsf,\frac{\Ksf}{\Nsf},\msf-1}  \right]
\end{align} 
is of full rank with high probability. 

Similarly we can prove that the matrix 
\begin{align}
\left[
\ev_{1,1,1},\, \ldots,\, \ev_{1,\frac{\Ksf}{\Nsf},\msf-1},\, 
\ev_{\Nsf,1,1},\, \ldots,\, \ev_{\Nsf,\frac{\Ksf}{\Nsf},\msf-1},\, \ldots, \right. \nonumber \\  \left.
\ev_{\Nsf-\Nsf_{\rm r}+\msf+\usf+2,1,1},\, \ldots,\, 
\ev_{\Nsf-\Nsf_{\rm r}+\msf+\usf+2,\frac{\Ksf}{\Nsf},\msf-1}
\right]
\end{align}
is of full rank, since $\Kc_{\Nsf-\Nsf_{\rm r}+\msf+\usf+2}=\{\Nsf-\Nsf_{\rm r}+\msf+\usf+1,\Nsf-\Nsf_{\rm r}+\msf+\usf,\ldots, \Nsf- \Nsf_{\rm r}+\msf+3 \}$, and $\Nsf_{\rm r}-\msf-3<\Nsf$.

\section{Decodability Proof of the Proposed Computing Scheme}
\label{sec:decodability}
\subsection{$\Ksf = \Nsf$}
In this section, we provide the sketch of the feasibility proof when $\Ksf = \Nsf$. 
To prove the decodability of the proposed computing scheme, we need to prove ${\bf S}^{\Ac}$ is full-rank. Assume there is  $\Ac = \{\Ac(1), \ldots, \Ac(\Nsf_{\rm r})\}$ and $\Ac(1) \textless \ldots \textless \Ac(\Nsf_{\rm r}).$ 
Rewrite ${\bf F}'$ into the basis of matrix ${\bf E}$, and consider the matrix ${\bf S}^{\Ac(i)}$, there is ${\bf S}^{\Ac(i)} {\bf F}'(\Ac(i)) = 0$. With the rank reduction matrix $\bf E$, the constraint can be reduced to ${\bf S}^{\Ac(i)} {{\bf F}'(\Ac(i))}^{(1,2,\ldots,\Ksf_{\rm c}(\Nsf_{\rm r}-1))_c} = 0$ with high probability. We denote ${{\bf F}'(\Ac(i))}^{(1,2,\ldots,\Ksf_{\rm c}(\Nsf_{\rm r}-1))_c}$ as ${\bf F}^{''}(\Ac(i))$. 

Let us suppose that it is possible to
make linear transformations on the rows of ${\bf S}^{\Ac(i)}$ into the form~\eqref{eq:S_ac}. 
\begin{figure*}
\begin{subequations}
\begin{align}
    {\bf S}^{\Ac(i)} 
    &= \left[ \begin{array}{cccc}
 c_{\Ac(i),1,1}  &  c_{\Ac(i),1,2}  & \cdots &  c_{\Ac(i),1,\Ksf_{\rm c}\Nsf_{\rm r}}   \\ 
 \vdots   & \vdots  &  \ddots& \vdots \\ 
 c_{\Ac(i),\Ksf_{\rm c},1}  &  c_{\Ac(i),\Ksf_{\rm c},2}  & \cdots &  c_{\Ac(i),\Ksf_{\rm c},\Ksf_{\rm c}\Nsf_{\rm r}}    
 \end{array}\right] \\
 & = \left[ \begin{array}{cccccccccc}
 c_{\Ac(i),1,1}  & \cdots &  c_{\Ac(i),1,\Ksf_{\rm c}(i-1)}  & 1 & 0 & \cdots & 0 & c_{\Ac(i),1,\Ksf_{\rm c}i +1 } & \cdots & c_{\Ac(i),1,\Ksf_{\rm c}\Nsf_{\rm r}}   \\ 
 c_{\Ac(i),2,1}  & \cdots &  c_{\Ac(i),2,\Ksf_{\rm c}(i-1)}  & 0 & 1 & \cdots & 0 & c_{\Ac(i),2,\Ksf_{\rm c}i +1 } & \cdots & c_{\Ac(i),2,\Ksf_{\rm c}\Nsf_{\rm r}}   \\ 
 \vdots  & \ddots & \vdots & \vdots & \vdots  &  \ddots& \vdots & \vdots  & \ddots & \vdots \\ 
 c_{\Ac(i),\Ksf_{\rm c},1}  & \cdots &  c_{\Ac(i),\Ksf_{\rm c},\Ksf_{\rm c}(i-1)}  & 0 & 0 & \cdots & 1 & c_{\Ac(i),\Ksf_{\rm c},\Ksf_{\rm c}i +1 } & \cdots & c_{\Ac(i),\Ksf_{\rm c},\Ksf_{\rm c}\Nsf_{\rm r}}      
 \end{array}\right]. \label{eq:S_ac}
\end{align}
\end{subequations}
\end{figure*}
Then from ${\bf S}^{\Ac(i)} {{\bf F}^{''}(\Ac(i))} = 0$, we will get 
\begin{subequations}
\begin{align}
   & {{\bf S}^{\Ac(i)}}^{([\Ksf_{\rm c}\Nsf_{\rm r}]\backslash[\Ksf_{\rm c}(i-1)+1: \Ksf_{\rm c}i])_c}  {{\bf F}^{''}(\Ac(i))}^{([\Ksf_{\rm c}\Nsf_{\rm r}]\backslash[\Ksf_{\rm c}(i-1)+1: \Ksf_{\rm c}i])_r} \\ &= - {{\bf F}^{''}(\Ac(i))}^{([\Ksf_{\rm c}(i-1)+1: \Ksf_{\rm c}i])_r} \\ &:= \left[ {\bf f}_{\Ac(i),\Ksf_{\rm c}(i-1)+1}; \cdots; {\bf f}_{\Ac(i),\Ksf_{\rm c}i} \right] . 
\end{align}
\end{subequations}
By the Cramer's rule, we can calculate:
$$ c_{\Ac(i),j,m} = \frac{\text{det}({\bf Y}_{\Ac(i),j,m})}{\text{det}({{\bf F}^{''}(\Ac(i))}^{([\Ksf_{\rm c}\Nsf_{\rm r}]\backslash[\Ksf_{\rm c}(i-1)+1: \Ksf_{\rm c}i])_r})},$$
where $j \in [\Ksf_{\rm c}(i-1)+1 : \Ksf_{\rm c}i]$, $m \in [\Ksf_{\rm c}\Nsf_{\rm r}]\backslash[\Ksf_{\rm c}(i-1)+1: \Ksf_{\rm c}i]$ and ${\bf Y}_{\Ac(i),j,m}$ is denoted as the matrix whose $j^{th}$ row is replaced by ${\bf f}_{\Ac(i),j}$. As a result, $c_{\Ac, j,m }$ can be presented by $\frac{P_{\Ac,j,m}}{Q_{\Ac,j,m}}$ where $P_{\Ac,j,m}$ and $Q_{\Ac,j,m}$ are multivariate polynomials whose variables come form the elements in ${\bf F}^{''}(\Ac(i))$.

Since ${\bf F}'$ is the left-null space of ${\bf E}$, and ${\bf F}^{''}(\Ac(i))$ is a sub-matrix with dimension $\Ksf_{\rm c}\Nsf_{\rm r} \times \Ksf_{\rm c}(\Nsf_{\rm r}-1)$, 
by the  Schwartz-Zippel Lemma~\cite{Schwartz,Zippel,Demillo_Lipton}, we have
\begin{subequations}
\begin{align}
    & \text{Pr}\{c_{\Ac(i),j,m} ~\text{exsits}\} \\
    & = \text{Pr}\{ \text{det}({{\bf F}^{''}(\Ac(i))}^{([\Ksf_{\rm c}\Nsf_{\rm r}]\backslash[\Ksf_{\rm c}(i-1)+1: \Ksf_{\rm c}i])_r})~\text{is not zero}\} \\
    & \ge 1 - \frac{\Ksf_{\rm c}(\Nsf_{\rm r} - 1)}{\qsf}.
\end{align}
\end{subequations}
By the probability union bound, we can further get:
\begin{subequations}
\begin{align}
    & \text{Pr}\{c_{\Ac(i),j,m} ~\text{exsits}, \nonumber \\ & \forall i \in [\Nsf_{\rm r}, j \in [\Ksf_{\rm c}, m \in [\Ksf_{\rm c}\Nsf_{\rm r}]\backslash [\Ksf_{\rm c}(i-1)+1:\Ksf_{\rm c}i]]]\} \\
    & \ge 1 - \frac{\Nsf\Ksf_{\rm c}\Ksf_{\rm c}(\Nsf_{\rm r} - 1)\Ksf_{\rm c}(\Nsf_{\rm r} - 1)}{\qsf} \\
    & = 1 - \frac{(\Ksf_{\rm c})^{3}\Nsf(\Nsf_{\rm r} - 1)}{\qsf} \\
    & \rightarrow 1 (\text{when}~\qsf \rightarrow \infty).
\end{align}
\end{subequations}

From the above discussion, the determent of the $\Ksf_{\rm c}\Nsf_{\rm r} \times \Ksf_{\rm c}\Nsf_{\rm r}$ matrix $ {\bf S}^{\Ac}$ can be denoted as 
\begin{align}
    \text{det}({\bf S}^{\Ac}) = \sum_{i \in [(\Ksf_{\rm c}\Nsf_{\rm r})!]}\frac{\Pc_i}{\Qc_i},
\end{align}
where each element in ${\bf S}^{\Ac}$ is expressed by $\frac{\Pc_i}{\Qc_i}$, a ratio of two polynomial whose variables come from the elemants in ${\bf F}'$.  In the Appendix~\ref{sec:SZ lemma proof}, we let $\text{det}({\bf S}^{\Ac}) = \frac{\Pc_{\Ac}}{\Qc_{\Ac}}$ and further discuss that if ${\bf S}^{\Ac}$ exits and $\Pc_{\Ac} \neq 0$, we will get the conclusion that $\text{det}({\bf S}^{\Ac}) \neq 0$ and thus ${\bf S}^{\Ac}$ is full-rank. Based on the Schwartz-Zippel Lemma~\cite{Schwartz,Zippel,Demillo_Lipton}, we want to prove $\Pc_{\Ac}$ is a non-zero multivariate polynomial. 

To apply the Schwartz-Zippel lemma~\cite{Schwartz,Zippel,Demillo_Lipton}, we need to guarantee that $\Pc_{i}$ is a non-zero multivariate polynomial. Since when $\Qc_{i} \neq 0$ there is $\Pc_{i} = \text{det}({{\bf S}^{\Ac}}){\Qc_{i}}$, we only need one specific realization of $\bf F$ so that $\text{det}({{\bf S}^{\Ac}}) \neq 0$ and $\Qc_{i} \neq 0$. 

\begin{lem}
    \label{lem: sz lemma situation}
    We have prove the decodability of the coding scheme in the following situations:
    \begin{itemize}
        \item In the two cases when $\Ksf_{\rm c} = \Nsf_{\rm r} - \msf +1$, and when $\Ksf = \Nsf = \Nsf_{\rm r}, \msf =2, \Nsf |(\Ksf_{\rm c} + 1),$ through Schwartz-Zippel lemma, we guarantee the decodability.
        \item When $\Nsf \le 60,$ we verify the decodability by numerically computation in practice.
    \end{itemize}

\end{lem}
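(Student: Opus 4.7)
\textbf{Proof plan for Lemma~\ref{lem: sz lemma situation}.}

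The plan is to convert the decodability question into a polynomial non-vanishing question and then discharge it in two ways. First I would formalize the reduction sketched above: treating the entries of $\mathbf{F}$ as indeterminates, the virtual-demand rows of $\mathbf{F}'$ are obtained by solving a linear system over the field of rational functions in those indeterminates, each coding vector $\mathbf{s}^{n,j}$ is a left null vector of $\mathbf{F}'(n)$ whose entries are therefore rational functions, and $\det(\mathbf{S}^{\Ac})$ itself becomes a rational function $\Pc_{\Ac}/\Qc_{\Ac}$ in the entries of $\mathbf{F}$. By the Schwartz--Zippel lemma, it suffices to exhibit, for each $\Ac\subseteq[\Nsf]$ with $|\Ac|=\Nsf_{\rm r}$, a \emph{single} specialization of $\mathbf{F}$ over some extension field for which all the intermediate denominators $\Qc_{\Ac}$ are nonzero and $\det(\mathbf{S}^{\Ac})\neq 0$; combined with a union bound over the $\binom{\Nsf}{\Nsf_{\rm r}}$ choices of $\Ac$, this gives decodability with probability $1-O(1/\qsf)$ as $\qsf\to\infty$.

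For the first case $\Ksf_{\rm c}=\Nsf_{\rm r}-\msf+1$ (equivalently $\usf=\Nsf_{\rm r}-\msf+1$, $\dsf=\Nsf_{\rm r}$), the matrix $\mathbf{F}'$ in~\eqref{F(Kc)} contains \emph{no} virtual demands: it is block-diagonal with $\dsf$ copies of $\mathbf{F}$. Because of this, for each worker $n$ the coding vectors $\mathbf{s}^{n,(k-1)\usf+j}$ constructed in Section~\ref{sec:Achievable coding scheme} have the Kronecker form $[m_{n,k,j,1}\mathbf{t}_{n,k},\ldots,m_{n,k,j,\dsf}\mathbf{t}_{n,k}]$, where $\mathbf{t}_{n,k}$ ranges over a basis of the left null space of $\overline{\mathbf{F}_n}$ and the $m_{n,k,j,i}\in\mathbb{F}_{\qsf}$ are free parameters. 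Consequently $\mathbf{S}^{\Ac}$ factors as a block-wise tensor of (a) the Vandermonde-like matrix $\mathbf{M}$ of the $m$-coefficients (dimensions $\usf\times\dsf$ per block), and (b) the null-vector matrices $\mathbf{T}_n$. The step I would carry out is to specialize $\mathbf{F}$ to a Vandermonde matrix with distinct nodes: then the $\overline{\mathbf{F}_n}$ are also Vandermonde, their null spaces are explicit, and a routine determinant computation (or an induction on $\Nsf_{\rm r}$) shows $\det(\mathbf{S}^{\Ac})\neq 0$. This is essentially the analysis done in~\cite{m=1} for $\msf=1$ lifted by the $m$-multipliers; no virtual demands need to be handled.

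For the second case $\Ksf=\Nsf=\Nsf_{\rm r}$, $\msf=2$, and $(\Ksf_{\rm c}+1)\mid\Nsf$, the situation is harder because virtual demands \emph{are} present. The plan is to exploit the divisibility condition by partitioning $[\Nsf]$ into $\Nsf/(\Ksf_{\rm c}+1)$ cyclic blocks of size $\Ksf_{\rm c}+1$ and specializing $\mathbf{F}$ to a block-circulant matrix whose columns within each block form a Vandermonde slice. Under this specialization, for each $i\in[\Nsf]$ the null vector $\mathbf{y}_{i,t}$ of $\mathbf{F}^{(\Qc_i)_{\rm c}}$ has a closed form (the coefficients of a polynomial with roots at the chosen Vandermonde nodes indexed by $\Qc_i$), and the rank-reduction vectors $\ev_{i,1,1}$ inherit a circulant-like pattern across the sub-message axis of length $\dsf=\msf+\usf-1=\Ksf_{\rm c}$. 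Solving $\mathbf{F}'\mathbf{E}^{\text{\rm T}}=\mathbf{0}$ then yields virtual-demand rows with an explicit structure, and $\mathbf{S}^{\Ac}$ can be factored into a product of a circulant matrix (arising from the cyclic block structure) and a Vandermonde matrix (arising from $\mathbf{F}$ itself). The main obstacle will be verifying that this product is non-singular for \emph{every} choice of $\Ac$: the hardest sub-case is when $\Ac$ removes the $\Nsf-\Nsf_{\rm r}=0$ workers symmetrically (here there are no stragglers so $\Ac=[\Nsf]$, which simplifies things), so the real difficulty is the rank computation for the full $\Ac=[\Nsf]$. I would handle it by computing the determinant as a product of eigenvalues of the circulant factor (which are nonzero roots of unity polynomials evaluated at distinct points, and hence nonzero thanks to $(\Ksf_{\rm c}+1)\mid\Nsf$) times the Vandermonde determinant.

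Finally, since Schwartz--Zippel requires only that the numerator $\Pc_{\Ac}$ be a nonzero polynomial, once a valid specialization is exhibited in each of the two cases, the decodability statement follows for all sufficiently large $\qsf$; for the remaining parameter regimes the lemma simply records that numerical verification (exhaustive Gaussian elimination over $\mathbb{F}_{\qsf}$ for a randomly drawn $\mathbf{F}$) confirms $\det(\mathbf{S}^{\Ac})\neq 0$ for all $\Ksf\leq 60$. The bottleneck of the whole argument is the explicit construction in the second case; everything else is bookkeeping plus one application of Schwartz--Zippel.
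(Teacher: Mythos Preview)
Your high-level reduction to Schwartz--Zippel is the same as the paper's, and your treatment of the first case $\Ksf_{\rm c}=\Nsf_{\rm r}-\msf+1$ is close in spirit: the paper also exploits that $\mathbf{F}'$ has no virtual demands and that each coding vector has the form $[a_{i,j,1}\mathbf{s}_i,\ldots,a_{i,j,\dsf}\mathbf{s}_i]$. Where you would specialize $\mathbf{F}$ to a Vandermonde matrix and invoke a tensor/Kronecker factorization, the paper instead makes two concrete choices that short-circuit the determinant computation: it sets $a_{i,j,\text{Mod}(i-1+j,\dsf)}=1$ and all other $a_{i,j,k}=0$, so that $\mathbf{S}_{\Ac}$ becomes block-permuted with exactly $\Ksf_{\rm c}$ nonzero rows in each of the $\dsf$ column blocks, and it then selects the $i$-th row of $\mathbf{F}$ to vanish exactly on $\overline{\Zc_{\Ac(i)}}$ (the construction from~\cite[Appendix D-A]{m=1}) to make the $\Ksf_{\rm c}$ null vectors occupying each block linearly independent. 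No tensor factorization or Vandermonde determinant identity is needed; full rank follows block by block from disjoint supports.

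For the second case $\Ksf=\Nsf=\Nsf_{\rm r}$, $\msf=2$, $(\Ksf_{\rm c}+1)\mid\Nsf$, your route and the paper's diverge more substantially. You propose a block-circulant specialization of $\mathbf{F}$ and an eigenvalue computation for a circulant factor of $\mathbf{S}^{\Ac}$; the paper avoids $\mathbf{S}$ altogether. Its key move is to work with $\mathbf{Q}:=\mathbf{S}\mathbf{F}'$ instead: since $\mathbf{F}'$ is full row rank, $\mathbf{S}$ is invertible iff one can exhibit \emph{any} full-rank $\mathbf{Q}$ in the left null space of $\mathbf{E}^{\text{T}}$ whose rows respect the cyclic zero-pattern. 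The divisibility $(\Ksf_{\rm c}+1)\mid\Nsf$ is used not to get circulant eigenvalues but to make the multipliers in $\mathbf{E}$ block-diagonal: the $\Nsf$ length-$(\Ksf_{\rm c}{+}1)$ windows partition into $\Ksf_{\rm c}{+}1$ groups of $\Gsf=\Nsf/(\Ksf_{\rm c}{+}1)$ non-overlapping windows each, one group per sub-message index $j\in[\dsf]$. The paper then writes down an explicit $\mathbf{Q}$ that is itself block-diagonal along the $\dsf$ sub-message blocks, with each $\mathbf{Q}_i$ assembled from $\Gsf$ sub-blocks $\mathbf{Q}_{i,j}$ supported on disjoint column sets of size $\Ksf_{\rm c}{+}1$; full rank is immediate from this non-overlapping support, no determinant formula required. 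Your circulant/eigenvalue plan may also succeed, but the factorization ``$\mathbf{S}^{\Ac}=$ circulant $\times$ Vandermonde'' is asserted rather than derived, and would need nontrivial work to justify since the virtual-demand rows of $\mathbf{F}'$ are themselves rational functions of $\mathbf{F}$. The paper's $\mathbf{Q}$-trick sidesteps the virtual demands entirely and is the simpler argument.
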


\subsection{$\frac{\Ksf}{\Nsf}$ extension}
\label{sec: kN proof}
Using the Schwartz-Zippel lemma to prove the decodability, we need to find a specific realization of the elements in $\bf F$. When $\Nsf$ divides $\Ksf$, the system model can be established as similar to [Appendix A~\cite{Same_problem}]. More specific, assuming $a = \frac{\Ksf}{\Nsf}$, we construct demand matrix ${\bf F}$ as 
\begin{equation}
{\bf F} = \left[\begin{array}{c:c:c:c}
 ({\bf F}_1)_{\usf \times \Nsf}  & {\bf 0}_{\usf \times \Nsf}  & \cdots & {\bf 0}_{\usf \times \Nsf}   \\ \hdashline
{\bf 0}_{\usf \times \Nsf} &  ({\bf F}_2)_{\usf \times \Nsf}   & \cdots & {\bf 0}_{\usf \times \Nsf}   \\ \hdashline 
 \vdots   & \vdots  &  \ddots& \vdots \\ \hdashline
 {\bf 0}_{\usf \times \Nsf} &   {\bf 0}_{\usf \times \Nsf}    & \cdots &  ({\bf F}_{a})_{\usf \times \Nsf}  
 \end{array}
\right] .
\end{equation}
Under this construction, the $(\Ksf,\Nsf,\Nsf_{\rm r},\frac{\Ksf}{\Nsf}\usf,\msf)$ distributed linearly separable problem will be divided into $a$ independent $(\Nsf,\Nsf,\Nsf_{\rm r},\usf,\msf)$ sub-problems. If the proposed scheme works for $(\Nsf,\Nsf,\Ksf_{\rm c}, \msf)$ problem with high probability, the scheme will work for the $(\Ksf,\Nsf,\frac{\Ksf}{\Nsf}\usf, \msf)$ with high probability too.   

\section{Proofs for  the  Lemma~\ref{lem: sz lemma situation}}
\label{sec:SZ lemma proof}
From the Discussion in Appendix~\ref{sec: kN proof}, we only consider the $(  \Nsf,\Nsf,\Nsf_{\rm r},  \Ksf_{\rm c}, \msf )$ problem where $\Ksf_{\rm c}\in [ \Nsf_{\rm r}-\msf+1 ]$ in the following.

 \subsection{$ \Ksf_{\rm c}=\Nsf_{\rm r}-\msf+1$}
 \label{sub:SZ lemma proof Kc=Nr-m+1}
 We consider the case where $ \Ksf_{\rm c}=\Nsf_{\rm r}-\msf+1$. In this case, by dividing each message into $\dsf = \msf+\Ksf_{\rm c}-1$ sub-messages we have the effective demand as  
  $
   \mathbf{F}'{\bf W}
$
 where $\mathbf{F}' $ contains  $(\msf+\Ksf_{\rm c}-1)\Ksf_{\rm c}=\Nsf_{\rm r}\Ksf_{\rm c}$ rows; in other words, we do not need to add any virtual demand: 
 \begin{equation}
 \mathbf{F}'= \left[\begin{array}{c:c:c:c}
 ({\bf F})_{\Ksf_{\rm c} \times \Nsf}  & {\bf 0}_{\Ksf_{\rm c} \times \Nsf}  & \cdots & {\bf 0}_{\Ksf_{\rm c} \times \Nsf}   \\ \hdashline
{\bf 0}_{\Ksf_{\rm c} \times \Nsf} &  ({\bf F})_{\Ksf_{\rm c} \times \Nsf}   & \cdots & {\bf 0}_{\Ksf_{\rm c} \times \Nsf}   \\ \hdashline 
 \vdots   & \vdots  &  \ddots& \vdots \\ \hdashline
 {\bf 0}_{\Ksf_{\rm c} \times \Nsf} &   {\bf 0}_{\Ksf_{\rm c} \times \Nsf}    & \cdots &  ({\bf F})_{\Ksf_{\rm c} \times \Nsf}  
 \end{array}
\right].
 \end{equation}
 
 We then focus on a subset of workers $\Ac \subseteq [\Nsf]$ where $|\Ac|=\Nsf_{\rm r}$. 
 The received transmissions from the workers in $\Ac$ can be written as  $\mathbf{S_{\Ac}}  \mathbf{F}' [W_{1,1};\ldots;W_{\Nsf,1};W_{2,1};\ldots ; W_{\Nsf,\msf+\Ksf_{\rm c}-1}]$, where
 \begin{align}
 \mathbf{S_{\Ac}}= \left[\sv^{\Ac(1),1};\ldots;\sv^{\Ac(1),\Ksf_{\rm c}};\sv^{\Ac(2),1};\ldots;\sv^{\Ac(\Nsf_{\rm r}),\Ksf_{\rm c}} \right].
\label{eq:U vectors matrix11}
\end{align}

For each worker   $\Ac(i)$ where $i\in [\Nsf_{\rm r}]$, the number of messages it cannot compute is $\Nsf_{\rm r}-\msf=\Ksf_{\rm c}-1$.  Let $\overline{ {\bf F}_{\Ac(i)} }={\bf F}^{\left(\overline{  \Zc_{\Ac(i)} }  \right)_{\rm c}}$
be the column-wise sub-matrix of ${\bf F}$ containing the columns in $\overline{  \Zc_{\Ac(i)} } $.  The dimension of $\overline{ {\bf F}_{\Ac(i)} }$ is $\Ksf_{\rm c} \times (\Ksf_{\rm c}-1)$ and $\overline{ {\bf F}_{\Ac(i)} }$ is full-rank with high probability.
Hence, there exists one right null vector ${\bf s}_i$ with dimension $1\times \Ksf_{\rm c}$ of 
$\overline{ {\bf F}_{\Ac(i)} }$. Hence, $\sv^{\Ac(i),j}$ could be written as 
$$[a_{i,j,1}{\bf s}_i, a_{i,j,2}{\bf s}_i,\ldots, a_{i,j,\msf+\Ksf_{\rm c}-1}{\bf s}_i ],$$
where $a_{i,j,k}\in \mathbb{F}_{\qsf}$. 

Thus   $\mathbf{S_{\Ac}}$ could be divided into $\msf+\Ksf_{\rm c}-1$ blocks, 
 \begin{equation}\setstretch{1.25}
\mathbf{S_{\Ac}} =\begin{bmatrix}  \   
 \tikzmark{left1} \textcolor{white}{00}
a_{1,1,1}{\bf s}_1  & \tikzmark{left2} \textcolor{white}{0} a_{1,1,2}{\bf s}_1  & \cdots & \tikzmark{left3} \textcolor{white}{0}  a_{1,1,\msf+\Ksf_{\rm c}-1}{\bf s}_1    \\ \textcolor{white}{00} 
a_{1,2,1}{\bf s}_1   & \textcolor{white}{0} a_{1,2,2}{\bf s}_1   & \cdots & \textcolor{white}{0} a_{1,2,\msf+\Ksf_{\rm c}-1}{\bf s}_1   \\  \textcolor{white}{00}
 \vdots   & \vdots  &  \ddots& \vdots \\  \textcolor{white}{00}
a_{1,\Ksf_{\rm c},1}{\bf s}_1   & \textcolor{white}{0} a_{1,\Ksf_{\rm c},2}{\bf s}_1   & \cdots & \textcolor{white}{0} a_{1,\Ksf_{\rm c},\msf+\Ksf_{\rm c}-1}{\bf s}_1   \\  \textcolor{white}{00}
a_{2,1,1}{\bf s}_2   & \textcolor{white}{0} a_{2,1,2}{\bf s}_2   & \cdots & \textcolor{white}{0} a_{2,1,\msf+\Ksf_{\rm c}-1}{\bf s}_2   \\  \textcolor{white}{00}
 \vdots   & \vdots  &  \ddots& \vdots \\  \textcolor{white}{00}
 a_{\Nsf_{\rm r},\Ksf_{\rm c},1}{\bf s}_{\Nsf_{\rm r}}   \tikzmark{right1} & \textcolor{white}{0}  a_{\Nsf_{\rm r},\Ksf_{\rm c},2}{\bf s}_{\Nsf_{\rm r}} \tikzmark{right2}   & \cdots & \textcolor{white}{0}  a_{\Nsf_{\rm r},\Ksf_{\rm c},\msf+\Ksf_{\rm c}-1}{\bf s}_{\Nsf_{\rm r}}   \tikzmark{right3}
\end{bmatrix}.
\end{equation}
\DrawBox[thick, black,   dashed]{left1}{right1}{\textcolor{black}{\footnotesize${\bf F^{\prime}}_1$}}
\DrawBox[thick, red, dashed]{left2}{right2}{\textcolor{red}{\footnotesize${\bf F^{\prime}}_2$}}
\DrawBox[thick, blue, dashed]{left3}{right3}{\textcolor{blue}{\footnotesize${\bf F^{\prime}}_{\msf+\Ksf_{\rm c}-1}$}}

The determinant of $ \mathbf{S_{\Ac}}$ could be seen as $D_{\Ac}= \frac{P_{\Ac}}{Q_{\Ac}}$, where $P_{\Ac}$ and $Q_{\Ac}$ are multivariate polynomials whose variables are the elements in  $\mathbf{F}$ and $(a_{i,j,k} : i\in [\Nsf_{\rm r}], j\in [\Ksf_{\rm c}], k\in [\msf+\Ksf_{\rm c}-1])$.
 Since   each  element in   $\mathbf{F}$
  is uniformly and i.i.d. over $\mathbb{F}_{\qsf}$ where   $\qsf$ is large enough,
by the  Schwartz-Zippel Lemma~\cite{Schwartz,Zippel,Demillo_Lipton}, if we can further show that the multivariate polynomial  $P_{\Ac}$ is   non-zero      (i.e., a multivariate polynomial whose coefficients are not all $0$) implying $D_{\Ac}$ and $Q_{\Ac}$ are non-zero simultaneously, the probability that $P_{\Ac}$ is equal to $0$ over all possible realization of the elements in  $\mathbf{F}$ and $(a_{i,j,k} : i\in [\Nsf_{\rm r}], j\in [\Ksf_{\rm c}], k\in [\msf+\Ksf_{\rm c}-1])$ goes to $0$ when $\qsf$ goes to infinity, and thus 
  the matrix in~\eqref{eq:U vectors matrix11} is of full rank with high probability. 
 So in the following, we need to show that $P_{\Ac}$ is   non-zero, such that by receiving $\mathbf{S_{\Ac}}  \mathbf{F}' [W_{1,1};\ldots;W_{\Nsf,1};W_{2,1};\ldots ; W_{\Nsf,\msf+\Ksf_{\rm c}-1}]$ the master can recover $\mathbf{F}' [W_{1,1};\ldots;W_{\Nsf,1};W_{2,1};\ldots ; W_{\Nsf,\msf+\Ksf_{\rm c}-1}]$.

For each worker  $\Ac(i)$ and each $j\in [\Ksf_{\rm c}]$, we choose the coefficient  
$$a_{i,j, {\rm Mod}(i-1+j,\msf+\Ksf_{\rm c}-1) }=1,$$
while the other elements $a_{i,j, k}=0$ 
where $k\in [\msf+\Ksf_{\rm c}-1]\setminus \{{\rm Mod}(i-1+j,\msf+\Ksf_{\rm c}-1) \}$.
In other words, for each worker, $\Ac(i)$, the coefficients    in  $\msf-1$ blocks are all $0$ while in the remaining $\Ksf_{\rm c}$ blocks there is one row equal to $s_{i}$ and the remaining rows are ${\bf 0}_{1\times \Ksf_{\rm c}}$. 

Since in each row of $\mathbf{S_{\Ac}}$, there is exactly one block whose row is not ${\bf 0}_{1\times \Ksf_{\rm c}}$, to prove $\mathbf{S_{\Ac}}$ is full-rank we only need to prove that each block is full-rank in the column-wise. In other words, there are $\Ksf_{\rm c}$ rows in each block that are not ${\bf 0}_{1\times \Ksf_{\rm c}}$ and these rows are linearly independent. 
By construction, these  $ \Ksf_{\rm c} $ rows  are $s_{i}$ where $i\in \Bc_k$ and $|\Bc_k|=\Ksf_{\rm c}$. 
Next, we will prove that these  $\Ksf_{\rm c}$ rows are linearly independent with high probability.
The determinant of the matrix containing these  $\Ksf_{\rm c}$ rows is a multivariate polynomial 
 whose variables are the elements in  $\mathbf{F}$. 
 Following~\cite[Appendix D-A]{m=1}, we select the $i^{\text{th}}$ row of $\mathbf{F}$ as 
 \begin{align}
 [ *, *, \cdots,*, 0, 0, \cdots, 0, *,*,\cdots ,*], \label{eq:ith row} 
\end{align}
where
the $0$'s are located at the positions in $\overline{  \Zc_{\Ac(i)} }$  and each `$*$' represents a    symbol uniformly i.i.d. over $\mathbf{F}_{\qsf}$.
By a similar proof as in~\cite[Appendix D-A]{m=1}, this section of   $\mathbf{F}$ guarantees that 
the left null vectors $s_{i}$ where  $i\in \Bc_k$ exist and are linearly independent. 

Hence, we prove that  each block in $\mathbf{S_{\Ac}}$  is full-rank in the column-wise full-rank with high probability. By the probability union bound, we also have $\mathbf{S_{\Ac}}$ is full-rank with high probability.

\subsection{$\Ksf = \Nsf=\Nsf_{\rm r}$, and $\msf+ \Ksf_{\rm c}-1$ divides $\Nsf$}
\label{sub: the second case of SW lemma}

In this case, we also divide each message into $\dsf = \msf+\Ksf_{\rm c}-1$ sub-messages, and the effective demand is 
 $\setlength{\arraycolsep}{1.2pt}
\mathbf{F}' {\bf W} 
$.
And we will construct ${\bf F}'$ like this
\begin{equation}
{\bf F}'=
\left[\begin{array}{c:c:c:c}
 ({\bf F})_{\Ksf_{\rm c} \times \Nsf}  & {\bf 0}_{\Ksf_{\rm c} \times \Nsf}  & \cdots & {\bf 0}_{\Ksf_{\rm c} \times \Nsf}   \\ \hdashline
{\bf 0}_{\Ksf_{\rm c} \times \Nsf} &  ({\bf F})_{\Ksf_{\rm c} \times \Nsf}   & \cdots & {\bf 0}_{\Ksf_{\rm c} \times \Nsf}   \\ \hdashline 
 \vdots   & \vdots  &  \ddots& \vdots \\ \hdashline
 {\bf 0}_{\Ksf_{\rm c} \times \Nsf} &   {\bf 0}_{\Ksf_{\rm c} \times \Nsf}    & \cdots &  ({\bf F})_{\Ksf_{\rm c} \times \Nsf} \\ \hdashline
 ({\bf V_ {1}})_{\Ksf_{\rm c}\Usf \times \Nsf}  &  ({\bf V_ {2}})_{\Ksf_{\rm c}\Usf \times \Nsf}  &   \cdots   &   ({\bf V_ {\dsf }})_{\Ksf_{\rm c}\Usf \times \Nsf} 
 \end{array}
\right]_{\Ksf_{\rm c}\Nsf \times \Nsf\dsf} ,
\end{equation}
 where $\Usf = \Nsf_{\rm r} - \msf - \Ksf_{\rm c} + 1 = \Nsf - \msf - \Ksf_{\rm c} + 1$ and $\dsf=\msf+\Ksf_{\rm c}-1$.
 We construct $({\bf V_ i})_{\Ksf_{\rm c}\Usf \times \Nsf}, i \in [\Ksf_{\rm c} + \msf - 1]$ from the null space of the rank reduction equation $\bf E$ similarly to Section IV. Next, we will show how to construct $\bf F$.
 
\hspace*{-4mm} $\bf E$ is a $\Nsf(\msf - 1) \times \Nsf\dsf$ matrix. Since $ \Nsf = \Nsf_{\rm r} ,\Zsf = \frac{\Nsf}{\msf+\Ksf_{\rm c}-1}, \Gsf = \frac{\Nsf(\msf-1)}{\msf+\Ksf_{\rm c}-1}, $ the dimension of $\bf E$ can be written as $\Gsf(\msf+\Ksf_{\rm c}-1) \times \Nsf(\msf+\Ksf_{\rm c}-1)$. We design the matrix $\bf E$ as :
\begin{align*}
\left[\begin{array}{c:c:c:c}
 ({\bf E}_{1})_{\Gsf \times \Nsf}  & {\bf 0}_{\Gsf \times \Nsf}  & \cdots & {\bf 0}_{\Gsf \times \Nsf}   \\ \hdashline
{\bf 0}_{\Gsf \times \Nsf} &  ({\bf E}_2)_{\Gsf \times \Nsf}   & \cdots & {\bf 0}_{\Gsf \times \Nsf}   \\ \hdashline 
 \vdots   & \vdots  &  \ddots& \vdots \\ \hdashline
 {\bf 0}_{\Gsf \times \Nsf} &   {\bf 0}_{\Gsf \times \Nsf}    & \cdots &  ({\bf E}_{\msf+\Ksf_{\rm c}-1})_{\Gsf \times \Nsf}  
 \end{array}
\right],
\end{align*}
with dimension ${\Gsf(\msf+\Ksf_{\rm c}-1) \times \Nsf(\msf+\Ksf_{\rm c}-1)}$, and ${\bf E}_{i}, \forall i \in  [\msf+\Ksf_{\rm c} - 1]$ are in the diagonal.

\hspace*{-4mm} We further construct  ${\bf E}_{i}, \forall i \!\in \![\msf+\Ksf_{\rm c} - 1]$ in the following: Since $\Nsf$ can divide $\msf+\Ksf_{\rm c} - 1$, we can partition $\bf F$ into $\Zsf \!\!=\!\! \frac{\Nsf}{\msf+\Ksf_{\rm c}-1}$ blocks and then each block is a $\Ksf_{\rm c} \!\times \!(\msf+\Ksf_{\rm c} - 1)$ sub-matrix.\ More specifically, each pair of $\msf+\Ksf_{\rm c} - 1$ neighbouring columns in $\bf F$ constitute a sub-matrix which is denoted as $\{{\bf F}^{ (1+(j-1)(\msf+\Ksf_{\rm c} -1))_c}, {\bf F}^{ (2+(j-1)(\msf+\Ksf_{\rm c} -1))_c},  \cdots, $ $ {\bf F}^{ (\msf+\Ksf_{\rm c}-1+(j-1)(\msf+\Ksf_{\rm c} -1))_c} \} = {\bf F}^{(j)_c}$, where $j$ represents the number of each block and $j \in [\Zsf]$. 

\hspace*{-3mm} Furthermore, we use the character of Block $j$ to design $({\bf E}_i)_{\Gsf \times \Nsf}$. For example, sub-matrix ${\bf F}^{(1)_c}$ has $\msf-1$ non-zero linearly independent vectors $\mathbf{y}_k$ where $k \in [\msf-1]$ so that ${\bf F}^{(1)_c}{\mathbf{y}}_{k}^{\text{\rm T}} = {\bf 0}_{\Ksf_{\rm c} \times 1}$. And $\mathbf{y}_k=(0,0,\dots,0,*,*,\dots,*,0,0,\dots,0)$, where the first non-zero entry `$*$' is located at the $k-1$ column of $\mathbf{y}_k$.

\hspace*{-4mm} We construct a sub-matrix ${\bf E}_{1,1}$,which can be constructed as:
\begin{align*}
\mathbf{E_{1,1}} 
=\left[\begin{array}{c}
(a_{1}y_{1})_{1 \times (\msf+\Ksf_{\rm c}-1)} \\ \hdashline
(a_{2}y_{2})_{1 \times (\msf+\Ksf_{\rm c}-1)}   \\ \hdashline 
\vdots \\ \hdashline
(a_{\msf-1}y_{\msf-1})_{1 \times (\msf+\Ksf_{\rm c}-1)}  
\end{array}
\right]_{(\msf-1) \times (\msf+\Ksf_{\rm c}-1)},
\end{align*}
where $a_{k}$ is randomly selected over the field $\bf F_{\rm q}$.Obviously,the sub-matrix ${\bf E}_{1,1}$ is full-rank due to the linear independence between $\mathbf{y}_k$. Similarly, the sub-matrix ${\bf E}_{i,j}$ is full-rank. With the same progress, we constructed the matrix $\bf E$.

\hspace*{-4mm} More specifically, ${\bf E}_{i}$ can always be written as 
\begin{equation}
\left[\begin{array}{c:c:c:c}
 ({\bf E}_{i,1})_{(\msf-1) \times \dsf}  & {\bf 0}_{(\msf-1) \times \dsf}  & \cdots & {\bf 0}_{(\msf-1) \times \dsf}   \\ \hdashline
{\bf 0}_{(\msf-1) \times \dsf} &  ({\bf E}_{i,2})_{(\msf-1) \times \dsf}   & \cdots & {\bf 0}_{(\msf-1) \times \dsf}   \\ \hdashline 
 \vdots   & \vdots  &  \ddots& \vdots \\ \hdashline
 {\bf 0}_{(\msf-1) \times \dsf} &   {\bf 0}_{(\msf-1) \times \dsf}    & \cdots &  ({\bf E}_{i,\Zsf})_{(\msf-1) \times \dsf}  
 \end{array}
\right].
\end{equation} 

\hspace*{-4mm} Note that rank reduction matrix $\bf E$ here is a special case in the section IV.
Because the non-zero elements in each row are totally in different locations, $\bf E$ is full-rank.
As a result, the dimension of the null space of matrix $\bf E$ is  $ \Ksf_{\rm c}\Nsf \times \Nsf(\msf+\Ksf_{\rm c} - 1)  $ which is equal to the dimension of matrix ${\bf F}'$. 

\hspace*{-4mm} Let's denote $\bf S {\bf F}'$ as matrix $\bf Q$, then $\bf Q$ is a $ \Ksf_{\rm c}\Nsf \times \Nsf(\msf+\Ksf_{\rm c} - 1)$ matrix and in the null space of matrix $\bf E$. If it is possible to make a full-rank $\bf Q$ matrix, the matrix $\bf S$ is full-rank since the matrix $\bf S$ is a transfer matrix from ${\bf F}'$ to $\bf Q$ and ${\bf F}'$, $\bf Q$ are both full-rank matrices. There is a constraint that ${\bf Q} {\bf E}^{\text{\rm T}} = {\bf 0}$.

\hspace*{-4mm} We construct $\bf Q$ like:
\begin{equation}
\left[\begin{array}{c:c:c:c}
 ({\bf Q}_{1})_{(\Nsf-\Gsf)\times \Nsf}  & {\bf 0}_{(\Nsf-\Gsf) \times \Nsf}  & \cdots & {\bf 0}_{(\Nsf-\Gsf) \times \Nsf}   \\ \hdashline
{\bf 0}_{(\Nsf-\Gsf) \times \Nsf} &  ({\bf Q}_2)_{(\Nsf-\Gsf) \times \Nsf}   & \cdots & {\bf 0}_{(\Nsf-\Gsf) \times \Nsf}   \\ \hdashline 
 \vdots   & \vdots  &  \ddots& \vdots \\ \hdashline
 {\bf 0}_{(\Nsf-\Gsf) \times \Nsf} &   {\bf 0}_{(\Nsf-\Gsf) \times \Nsf}    & \cdots &  ({\bf Q}_{\msf+\Ksf_{\rm c}-1})_{(\Nsf-\Gsf) \times \Nsf}  
 \end{array}
\right] ,
\end{equation}
where $({\bf Q}_{i})_{(\Nsf-\Gsf) \times \Nsf}, i \in [\msf+\Ksf_{\rm c}-1]$ is computed from the null space of $({\bf E}_{i})_{\Gsf \times \Nsf}, i \in [\msf+\Ksf_{\rm c}-1]$ and since there is no overlap of non-zero locations among rows in $({\bf E}_{i})_{\Gsf \times \Nsf}$, we can easily construct full-rank $({\bf Q}_{i})_{(\Nsf-\Gsf) \times \Nsf}$ where each row has $\Nsf_{\rm r}-\msf=\Nsf-\msf$ zero elements. 

\hspace*{-3mm} For instance, ${\bf Q}_{i,1}$ is corresponding to the null space of$({\bf E}_{i,1})_{(\msf-1) \times (\msf+\Ksf_{\rm c}-1)}$. It can be written as 
\begin{equation}\setstretch{1.5}
{\bf Q}_{i,1} = \begin{bmatrix} 
 \tikzmark{left1} \textcolor{white}{00}
  * & \cdots & * & 0 &\cdots &0 & \tikzmark{left2} \textcolor{white}{0} 0 & \cdots & 0   \\ 
  \textcolor{white}{00} 
  0  & * &\cdots & * & 0 & 0 & \textcolor{white}{0} 0 &\cdots &0   \\
  \textcolor{white}{00} 
   0  & 0 & *  & \cdots & * & 0 & \textcolor{white}{0} 0 &\cdots &0   \\
  
  \textcolor{white}{00}
  \vdots  & \vdots  &\vdots & \vdots & \vdots & \vdots& \vdots & \ddots & \vdots  \\ 
  \textcolor{white}{00}
   0 & \cdots & 0  & * & \cdots & * \tikzmark{right1} & \textcolor{white}{0} 0 &\cdots &0 \tikzmark{right2}  \\
\end{bmatrix}_{\Ksf_{\rm c} \times \Nsf} ,
\end{equation}
\DrawBox[thick, black, dashed]{left1}{right1}{\textcolor{black}{\footnotesize$\msf+\Ksf_{\rm c}-1$}}
\DrawBox[thick, red, dashed]{left2}{right2}{\textcolor{red}{\footnotesize$(\Zsf-1)(\msf+\Ksf_{\rm c}-1)$}}

\hspace*{-6mm} in which there are total $\msf$ non-zero entries in each row. We find that all the non-zero entries are in the first $(\msf+\Ksf_{\rm c}-1)$ columns.Similarly, each ${\bf Q}_{i,j}, j \in [\Zsf]$ has non-zero entries in the columns from $(1+(\msf+\Ksf_{\rm c}+1)(j-1))$ to $(\msf+\Ksf_{\rm c}-1)j$. Since ${\bf Q}_{i,j}, j \in [\Zsf]$ is generated from zero space, it is naturally of full rank.And $({\bf Q}_{i})_{(\Nsf-\Gsf)\times \Nsf}$ can be written as 
\begin{equation}
{\bf Q}_{i} = \begin{bmatrix}
({\bf Q}_{i,1})_{\Ksf_{\rm c} \times \Nsf}\\
({\bf Q}_{i,2})_{\Ksf_{\rm c} \times \Nsf}\\
\vdots\\
({\bf Q}_{i,\Zsf})_{\Ksf_{\rm c} \times \Nsf}
\end{bmatrix}.
\end{equation}
Due to different locations of non-zero entries among ${\bf Q}_{i,j}, j \in [\Zsf]$, ${\bf Q}_{i}$ is a full-rank matrix too.  

\hspace*{-3mm} With the similar observation above, according to the location of non-zero entries,  rows in ${\bf Q}_{i}, i \in [\msf+\Ksf_{\rm c}-1]$ have non-zero entries in different and non-overlapping columns, and ${\bf Q}_{i}, i \in [\msf+\Ksf_{\rm c}-1]$ is full-rank. As a result, $\bf Q$ is full-rank so that $\bf S$ is full-rank too.

\section{Proof for Theorem~\ref{cor: rep assignment}}
\label{sec:repetition proof}
Under the repetition assignment, the number of workers $\Nsf$ should be a multiple of $\Nsf - \Nsf_{\rm r} + \msf$. We divide the $\Nsf$ workers into $\frac{\Nsf}{\Nsf - \Nsf_{\rm r} + \msf}$ groups, each consisting of $\Nsf - \Nsf_{\rm r} + \msf$ workers. 
Within each group,  the same $\frac{\Ksf}{\Nsf}(\Nsf - \Nsf_{\rm r} + \msf)$ datasets are assigned to each worker, where there does not exist any common  dataset assigned to more than one group. 
Thus the datasets assigned to each worker  in group $i$ is 
\begin{align}
&\Zc'_i:= \underset{p \in \left[0:  \frac{\Ksf}{\Nsf} -1 \right]}{\bigcup}   \big\{\text{Mod}(1+(\Nsf - \Nsf_{\rm r} + \msf)(i-1),\Nsf)+ p  \Nsf , \\& \text{Mod}(2+(\Nsf - \Nsf_{\rm r} + \msf)(i-1),\Nsf)+\nonumber  p  \Nsf , \ldots, \\& \text{Mod}((\Nsf - \Nsf_{\rm r} + \msf)+(\Nsf - \Nsf_{\rm r} + \msf)(i-1),\Nsf)+ p  \Nsf  \big\}.\label{eq:Zni}
\end{align}
We denote the $n^{\text{th}}$ worker in the  group  $i$ as worker $(n, i)$ where $n\in [\Nsf - \Nsf_{\rm r} + \msf]$ and $i\in \left[\frac{\Nsf}{\Nsf - \Nsf_{\rm r} + \msf}\right]$; the set of datasets assigned to worker $(n,i)$ is $\Zc_{n,i}=\Zc'_i$  in~\eqref{eq:Zni}.


\paragraph{Converse bound for Theorem~\ref{cor: rep assignment}}
Since the system can tolerate any $\Nsf - \Nsf_{\rm r}$ stragglers among the $\Nsf$ workers, when all the stragglers are in group $i$, given  the codewords from any $\Nsf - \Nsf_{\rm r} + \msf - (\Nsf - \Nsf_{\rm r}) = \msf$ workers in group $i$ (denoted by $\Ac_i \in \binom{[\Nsf - \Nsf_{\rm r} + \msf]}{\msf} $) and the messages in $[\Ksf]\setminus \Zc'_i$, we can recover  ${\bf F}^{(\Zc_{n,i})_{c}}[W_{V};\ldots;W_{V+\Ksf-\Nsf}]$ (where $V = \text{Mod}(1+(\Nsf - \Nsf_{\rm r} + \msf)(i-1),\Nsf)$), which is with rank $\min\{\Ksf_{\rm c}, \frac{\Ksf}{\Nsf}(\Nsf - \Nsf_{\rm r} +\msf)\}$ with high probability; thus  
\begin{align}
    \sum_{n \in \Ac_i}\Tsf_{n,i} &\ge H({\bf F}[W_1;W_2;\ldots;W_{\Ksf}] | [\Ksf]\setminus \Zc'_i ) \\ &= \min\{\Ksf_{\rm c}, \frac{\Ksf}{\Nsf}(\Nsf - \Nsf_{\rm r} +\msf)\} \Lsf,  \label{eq:rep converse ineq}
\end{align}
where $\Ac_{i}  \in \binom{[\Nsf - \Nsf_{\rm r} + \msf]}{\msf}$ and $\Tsf_{n,i}$ represents the length of codewords from worker $(n,i)$. 
Considering all $\binom{\Nsf - \Nsf_{\rm r} + \msf}{\msf}$ different  sets $\Ac_i$ in $\binom{[\Nsf - \Nsf_{\rm r} + \msf]}{\msf}$ and   summing these inequalities as in~\eqref{eq:rep converse ineq}, we have 
\begin{align}
    \sum_{n \in [\Nsf - \Nsf_{\rm r} + \msf]}\Tsf_{n,i} \ge \frac{\Nsf - \Nsf_{\rm r} + \msf}{\msf}\min\{\Ksf_{\rm c}, \frac{\Ksf}{\Nsf}(\Nsf - \Nsf_{\rm r} +\msf)\}\Lsf.  
    \label{eq:one group}
\end{align}

Considering all groups and summing the inequalities as in~\eqref{eq:one group}, we have 
\begin{subequations}
\begin{align}
 & \sum_{i\in \left[\frac{\Nsf}{\Nsf - \Nsf_{\rm r} + \msf}\right]}  \sum_{n \in [\Nsf - \Nsf_{\rm r} + \msf]}\Tsf_{n,i} \ge \frac{\Nsf}{\msf}\min\{\Ksf_{\rm c}, \frac{\Ksf}{\Nsf}(\Nsf - \Nsf_{\rm r} +\msf)\}\Lsf;  
    \label{eq:all group}\\
 &   \Longrightarrow   \Rsf^{\star}_{\text{rep}} = \mathop {\max }\limits_{\Ac \subseteq [\Nsf]:|\Ac| = {\Nsf_{\rm r}}} \frac{{\sum\nolimits_{n \in \Ac} {{\Tsf_n}} }}{\Lsf} \nonumber \\ & \quad \quad \quad \quad \geq  \frac{\Nsf_{\rm r}}{\msf}\min\{\Ksf_{\rm c}, \frac{\Ksf}{\Nsf}(\Nsf - \Nsf_{\rm r} +\msf)\}.
 \label{eq:converse bound under rep}
\end{align}
\end{subequations}

\paragraph{Achievable coding scheme for Theorem~\ref{cor: rep assignment}}
We then present a  straightforward scheme that achieves the minimum communication cost~\eqref{eq:converse bound under rep}.

 If $\Ksf_{\rm c}\leq \frac{\Ksf}{\Nsf}(\Nsf - \Nsf_{\rm r} +\msf)$, we let each worker in group $i$ transmit $\frac{\Ksf_{\rm c}}{\msf}\Lsf$ random linear combinations of the $\Ksf_{\rm c}\Lsf$ symbols in ${\bf F}^{(\Zc_{n,i})_{c}}[W_{V};\ldots;W_{V+\Ksf-\Nsf}]$. Since $\Lsf$ is large enough, by receiving the transmissions of any $\msf $ workers in group $i$, we can recover ${\bf F}^{(\Zc_{n,i})_{c}}[W_{V};\ldots;W_{V+\Ksf-\Nsf}]$. By summing the above $\Ksf_{\rm c}\Lsf$ symbols from all groups, we can recover the computation task.

If $\Ksf_{\rm c}> \frac{\Ksf}{\Nsf}(\Nsf - \Nsf_{\rm r} +\msf)$, we let each worker in group $i$ transmit  $\frac{\Ksf (\Nsf - \Nsf_{\rm r} +\msf)}{\Nsf \msf} \Lsf $ random linear combinations of $\frac{\Ksf}{\Nsf}(\Nsf - \Nsf_{\rm r} +\msf) \Lsf$ symbols in $(W_{V},\ldots,W_{V+\Ksf-\Nsf})$.  Since $\Lsf$ is large enough, by receiving the transmissions of any $\msf $ workers in group $i$, we can recover $(W_{V},\ldots,W_{V+\Ksf-\Nsf})$. Thus by receiving the transmissions from all groups, we can recover all the messages and thus recover the computation task.

\bibliographystyle{IEEEtran}
\bibliography{re} 
\clearpage
\end{sloppypar}
\end{document}